\documentclass{article}

\usepackage{fullpage}
\usepackage{enumerate}
\usepackage{graphicx}
\usepackage{setspace}
\usepackage{listings}
\usepackage[margin=1in]{geometry}
\usepackage{xcolor}
\usepackage{forest}
\usepackage{adjustbox}
\usepackage{float}
\usetikzlibrary{arrows.meta,backgrounds,calc,positioning}

\usepackage{amsfonts}
\usepackage{amssymb}
\usepackage{amsthm}

\newcommand{\CCfont}[1]{\ensuremath{\mathsf{#1}}}
\newcommand{\AC}{\CCfont{AC}}

\newcommand{\ACC}{\CCfont{ACC}}
\newcommand{\NEXP}{\CCfont{NEXP}}
\newcommand{\NQP}{\CCfont{NQP}}

\newtheorem{theorem}{Theorem}[subsection]
\newtheorem{corollary}[theorem]{Corollary}
\newtheorem{lemma}[theorem]{Lemma}

\theoremstyle{definition}

\newtheorem*{definition}{Definition}

\usepackage{hyperref}
\definecolor{linkblue}{HTML}{0055CC}
\hypersetup{colorlinks=true, allcolors=linkblue}

\usepackage[most]{tcolorbox}
\newtcolorbox{openproblem}[1]{
  colback=white, colframe=black!65, colbacktitle=black!6,
  coltitle=black, fonttitle=\bfseries, title={#1},
  boxrule=0.6pt, arc=0pt, left=8pt, right=8pt, top=6pt, bottom=6pt,
  before skip=10pt, after skip=10pt
}

\title{\bf Majority and Torus Polynomials:\\
Bounded-Lift Lower Bounds and Explicit Constructions}
\date{}

\author{
  Saint Wesonga\\
  {\fontsize{10}{12}\selectfont University of Wyoming}\\
  {\fontsize{10}{12}\selectfont\texttt{swesonga@uwyo.edu}}
}

\definecolor{codegray}{rgb}{0.95, 0.95, 0.95}
\begin{document}

\renewcommand*{\theHtheorem}{\thesubsection.\arabic{theorem}}

\maketitle

\begin{abstract}
  Bhrushundi et al. (2019)~\cite{bhlr} introduced torus polynomial approximation
  as an approach to separating Majority from $\ACC^0$. In this model,
  approximation is measured modulo integers. Krishan and Vishwanathan
  (2026)~\cite{kv} developed a duality framework for lower bounds and posed five
  open problems. We resolve two of these, concerning integer lifts and explicit
  polynomial approximations.  An integer lift records the integer shifts that
  turn a torus approximation into an ordinary real approximation. We show that a
  degree-$d$ torus polynomial approximating Majority with error at most $1/8$
  and lift contained in an interval of width $B$ must satisfy
  $d(B+1)=\Omega(n)$. This rules out low-degree approximating polynomials with
  lifts in $\{-1,0,1\}$, answering their Open Problem 2. We also answer Open
  Problem 5 by giving, for every Boolean function on $n\ge1$ variables, an
  explicit torus polynomial approximation of degree at most $n-1$. Its error is
  optimal: zero when the accepting set has even cardinality and $2^{-n-1}$
  otherwise.  The results presented here were discovered and formalized in Lean
  using GPT-6 Astra.
\end{abstract}

\section{Introduction}

Separating Majority from polynomial-size, constant-depth circuits with modular
gates is a central challenge in circuit complexity. Furst, Saxe, and
Sipser~\cite{FuSaSi81} and, independently, Ajtai~\cite{Ajtai1983} showed that
the PARITY function is not in $\AC^0$, the class of Boolean functions computed
by nonuniform polynomial-size, constant-depth circuits over the standard
unbounded-fan-in Boolean basis (AND, OR, NOT). This result was strengthened by
H{\aa}stad~\cite{hastad1986almost}. Subsequent work investigated the power of
including modular gates in the basis. For an integer $q\geq2$, the class
$\AC^0[q]$ consists of functions computed by nonuniform polynomial-size,
constant-depth circuits with MOD$_q$ added to the Boolean
basis. Razborov~\cite{Razb1} proved that Majority is not in $\AC^0[2]$.
Building on this method, Smolensky~\cite{Smo87} proved that for any prime $p$
and any integer $q\geq2$ that is not a power of $p$, the MOD$_q$ function is not
in $\AC^0[p]$. The union of $\AC^0[q]$ over all integers $q\geq2$ is called
$\ACC^0$. There was little progress on obtaining lower bounds against $\ACC^0$
until Williams~\cite{Williams:2014} showed that $\NEXP$ (nondeterministic
exponential time) is not contained in the class.  Murray and
Williams~\cite{Williams:2018} improved this result from $\NEXP$ to $\NQP$
(nondeterministic quasi-polynomial time).  A major open question is whether the
Majority function is in $\ACC^0$.

Bhrushundi, Hosseini, Lovett, and Rao~\cite{bhlr} introduced torus polynomial
approximation as an algebraic approach to $\ACC^0$ lower bounds.  Torus
polynomials accommodate modular gates while retaining ordinary polynomial
degree. In this model, a real polynomial $P$ approximates a Boolean function $f$
on the \emph{Boolean cube} $\{0,1\}^n$. Approximation is measured modulo
integers. At each input $x$, an \emph{integer lift} $Z(x)\in\mathbb Z$ is chosen
so that $|P(x)-Z(x)-f(x)/2|$ is small. The lift therefore turns a modular
approximation into an ordinary real approximation of $Z+f/2$.  Its values can
vary substantially across inputs even when $P$ has low degree.  Bhrushundi et
al.~\cite[Corollary~20]{bhlr} show that $\ACC^0$ circuits can be simulated by
torus polynomials.  If such a circuit computes $f$ on $n$ inputs, then for every
$0<\varepsilon<1/2$ there is a polynomial of degree
$(\log(n/\varepsilon))^{O(1)}$ that approximates $f/2$ modulo integers within
$\varepsilon$ at every Boolean input. In particular, choosing inverse-polynomial
error gives polylogarithmic degree. Consequently, a suitable
super-polylogarithmic torus-degree lower bound for Majority would show that
Majority cannot be computed by $\ACC^0$ circuits.  Bhrushundi et al.~\cite{bhlr}
also prove a lower bound of $\Omega(\sqrt{n/\log n})$ for symmetric torus
polynomials approximating Majority at error $1/(20n)$.

For ordinary real approximation, averaging a polynomial over all permutations of
its inputs preserves both the degree bound and the error when the target is
symmetric, as Majority is. The average is therefore a symmetric approximating
polynomial. For torus approximation, the integer lifts can differ between
permuted inputs, and their average need not be an integer.  For example, $0$ and
$1$ both represent $0$ modulo one, but their average modulo $1$ is $1/2$. Thus,
unlike the real case, one cannot assume without loss of generality that a torus
approximation to Majority is symmetric. Establishing the above lower bound for
nonsymmetric polynomial approximations would show that Majority is not in
$\ACC^0$.

Krishan and Vishwanathan~\cite{kv} recently developed a linear programming
framework for torus lower bounds. Fixing an integer lift turns approximation
into linear constraints on the polynomial coefficients. Their duality theorem
shows that a degree lower bound is equivalent to producing, for every lift, a
witness certifying that these constraints are infeasible. For Majority, they
normalize and bound the lifts, leaving only finitely many candidates at each
input length and degree. This makes it possible to advance towards a lower bound
by excluding successive families of lifts.  Their framework also yields lower
bounds for unrestricted torus polynomials approximating AND and stronger bounds
for symmetric torus polynomials approximating AND and Majority.  In particular,
their AND bounds show that imposing symmetry can increase the required degree,
confirming a substantive obstruction to symmetrization.  Their Open Problem 2
asks whether, for all sufficiently large $n$, one can rule out torus polynomials
of degree $O(\sqrt{n/\log n})$ that approximate Majority with error at most
$1/(20n)$ and whose integer lifts take values only in $\{-1,0,1\}$.  The
following formulation makes the required dual witness precise.

\begin{openproblem}{Open Problem 2 of Krishan and Vishwanathan~\cite{kv}: three-valued integer lifts}
We state a stronger form of the problem, omitting the normalization
$Z(x)=0$ for $|x|\le d(n)$ imposed after Lemma~9 in
Krishan and Vishwanathan~\cite[Section~3]{kv}. Thus the statement includes
all the normalized three-valued lifts considered there.

Let $M_n(x)=\mathbf{1}[|x|\ge n/2]$ be Majority with ties accepted, where $|x|$
counts the ones in $x$. Let $d:\mathbb N\to\mathbb N$ satisfy
$d(n)=O(\sqrt{n/\log n})$. Show that, for all sufficiently large $n$ and every
$Z:\{0,1\}^n\to\{-1,0,1\}$, there is a real function $\gamma:\{0,1\}^n\to\mathbb
R$ such that
\[
 \sum_x\gamma(x)q(x)=0
 \quad\hbox{for every polynomial $q$ with $\deg q\le d(n)$},
\]
and
\[
 \left|\sum_x\gamma(x)\left(Z(x)+\frac{M_n(x)}2\right)\right|
 >\frac1{20n}\sum_x|\gamma(x)|.
\]
All sums range over $\{0,1\}^n$. This is the assertion that every three-valued
lift is \emph{good}, meaning that it has a dual witness excluding a
degree-at-most-$d(n)$ approximation to $Z+M_n/2$ with uniform error
$1/(20n)$. The witness may depend on $Z$.
\end{openproblem}

Krishan and Vishwanathan also prove that every Boolean function has a torus
polynomial approximation of degree at most $n-1$ and error $2^{-n-1}$, symmetric
when the function is symmetric. Their Open Problem 5 asks for an explicit
construction achieving this guarantee.

\begin{openproblem}{Open Problem 5 of Krishan and Vishwanathan~\cite{kv} (paraphrased): an explicit approximation at tiny error}
Given $n\ge1$ and any Boolean function $f:\{0,1\}^n\to\{0,1\}$, explicitly
construct a real polynomial $P$ of degree at most $n-1$ such that
\[
 \min_{z\in\mathbb Z}\left|P(x)-\frac{f(x)}2-z\right|
 \le 2^{-n-1}
 \quad\hbox{for every }x\in\{0,1\}^n.
\]
Krishan and Vishwanathan~\cite{kv} already established the existence
of such a polynomial. The question asks for a construction.
\end{openproblem}

We answer these two questions. For Open Problem 2, we prove the existence of
separating witnesses for every three-valued lift of Majority, extending the
requested degree range to degrees below a constant fraction of $n$.  For Open
Problem 5, we give an explicit torus approximation of optimal error for every
Boolean function.

\subsection{Results and significance}

Our first result converts torus approximation into real approximation for
arbitrary Boolean functions. It uses a \emph{decoder}, a univariate polynomial
$H$ whose output is close to zero near the relevant even integers and close to
one near the relevant odd integers. Since $2P(x)$ is close to $2Z(x)+f(x)$,
whose parity is $f(x)$, the composition $H(2P(x))$ approximately recovers
$f(x)$. For an interval-width-$B$ lift, a decoder of degree $O(B+1)$ converts a
degree-$d$ torus approximation at error at most $1/8$ into a real constant-error
approximation of degree $O((B+1)d)$ (Theorem~\ref{thm:bounded-lift}). Thus a
low-degree torus approximation to any function of real approximate degree
$\Omega(n)$ must satisfy $B+1=\Omega(n/d)$, and $B=\Omega(n/d)$ when
$d=o(n)$. This applies to both Majority and PARITY.

Paturi's theorem~\cite{paturi} gives Majority's linear real approximate-degree
lower bound, with the explicit constant $1/10000$ established by our
formalization. Combining this bound with the conversion rules out the proposed
approximations. Finite-dimensional separation then gives the witnesses
nonconstructively (Corollary~\ref{cor:dual}). For lifts in $\{-1,0,1\}$, taking
$B=2$ gives a real approximation of degree at most $47d$. Since the error
$1/(20n)$ required by Open Problem 2 is at most $1/8$ for every $n\geq1$, this
general conversion already yields the requested separating witnesses whenever
$d<n/470000$.  A degree-five decoder, called a \emph{quintic decoder}, improves
only the numerical constant. It reduces the composition degree from $47d$ to
$5d$, giving the stronger threshold $d<n/50000$ for $n\geq40$.

Our second result gives a closed formula for the polynomial approximations at
tiny error in Open Problem 5 (Theorem~\ref{thm:explicit}). If the top
coefficient of the integer multilinear extension of $f$ is $2k+b$, with
$b\in\{0,1\}$, then subtracting an integer-valued monomial and a multiple of the
alternating-sign polynomial $\chi(x)=\prod_{i=1}^n(2x_i-1)$ leaves degree at
most $n-1$ and error exactly $b2^{-n-1}$. The formula preserves symmetry and
attains the alternating-sum lower bound of Krishan and
Vishwanathan~\cite{kv}. For Majority, this error is positive exactly when $n$ is
a power of two (Corollary~\ref{cor:majority-lengths}).

The two answers clarify different aspects of the approximation model. The first
quantifies the lift width required by functions of large real approximate
degree. PARITY already exhibits this tradeoff, although it belongs to
$\AC^0[2]\subseteq\ACC^0$. Using this framework to separate Majority from
$\ACC^0$ will therefore require properties of Majority beyond its real
approximate degree. The second makes the previously existential tiny-error
approximation theorem constructive and determines its exact error for every
Boolean function.

The results are formalized in Lean~4 using the Mathlib library. The
formalization also includes a proof of Paturi's characterization, with a dual
lower-bound construction following Bun and Thaler~\cite{bt}, and the Majority
approximate-degree lower bound used in the torus arguments.  The Lean sources
are available at
\url{https://github.com/formalcs/circuit-complexity/releases/tag/release%2F0.2.0}.
Appendix~\ref{app:lean} collects the corresponding Lean statements and explains
their notation. References alongside the mathematical arguments point interested
readers to the relevant listings.

\subsection{AI disclosure}

The results presented here were discovered and formalized using GPT-6 Astra.
The model used the Lean circuit-complexity framework developed by
Wesonga~\cite{wesonga26} for the formalization of PARITY lower bounds as a
starting point.  AI assistance was also used to examine related literature,
draft and revise the exposition, and maintain the LaTeX manuscript. The author
guided the choice of questions and the scope of the paper and requested
revisions to its mathematical framing and presentation. The author is
responsible for the mathematical claims, citations, and final text.

\section{Preliminaries}

\subsection{Real and torus approximation}

Our lower bounds compare two ways of approximating a Boolean function by a
polynomial. Both require a single polynomial to work at every input, but real
approximation measures error on the real line, while torus approximation
measures it modulo integers. The integer lift connects these models and lets us
use real approximate-degree lower bounds to study torus polynomials.

Write $[n]=\{1,\ldots,n\}$, $x_S=\prod_{i\in S}x_i$, and $1_S$ for the Boolean
vector with support $S$. All polynomials may be multilinearized using
$x_i^2=x_i$ when evaluated on the Boolean cube. This operation does not increase
degree. We use strict Majority,
\[
 \mathsf{MAJ}_n(x)=\mathbf{1}[|x|>n/2].
\]
For Majority with ties accepted, $M_n(x)=1-\mathsf{MAJ}_n(\bar x)$,
where $\bar x$ is the coordinatewise complement of $x$. Thus the
transformation $P'(y)=1/2-P(\bar y)$ and $Z'(y)=-Z(\bar y)$ transfers
an approximation of strict Majority to one of $M_n$, preserving the degree,
the error, and the lift width, as well as the restriction to lifts in
$\{-1,0,1\}$. Moreover,
$|M_n^{-1}(1)|=2^n-|\mathsf{MAJ}_n^{-1}(1)|$ has the same parity as
$|\mathsf{MAJ}_n^{-1}(1)|$ for $n\geq1$, so the exact-error classification
in Corollary~\ref{cor:majority-lengths} holds for both tie conventions.

Real approximate degree measures the degree needed to achieve a prescribed
uniform error. For a Boolean function $f$, its real $\eta$-approximate degree
$\widetilde{\deg}_{\eta}(f)$ is the least degree of a real polynomial $Q$
satisfying $|Q(x)-f(x)|\le\eta$ at every Boolean point.  At error $1/3$, the
polynomial's values on inputs with $f(x)=0$ and $f(x)=1$ lie in disjoint
intervals. A lower bound on this degree therefore limits how efficiently a
polynomial can distinguish the two outputs.

On the torus $\mathbb T=\mathbb R/\mathbb Z$, real numbers that differ by an
integer represent the same point. In particular, $0$ and $1$ coincide, so we
encode the two Boolean outputs by $0$ and $1/2$ instead. The distance to the
nearest integer,
\[
 \|u\|_{\mathbb T}=\min_{z\in\mathbb Z}|u-z|,
\]
measures the resulting approximation error.

\begin{definition}
A real polynomial $P$ approximates $f$ on the torus with error at most
$\varepsilon$ if
\[
 \max_{x\in\{0,1\}^n}\left\|P(x)-\frac{f(x)}2\right\|_{\mathbb T}
 \le\varepsilon.
\]
Equivalently, there is an integer-valued function $Z$ such that
$|P(x)-Z(x)-f(x)/2|\le\varepsilon$ everywhere. Such a $Z$ is an
\emph{integer lift}. Its interval width is
$\max_x Z(x)-\min_x Z(x)$.
\end{definition}

The lift records which integer translate of $f(x)/2$ the polynomial approaches
at each input. Bounding its interval width limits the spread of these
translates, which is what makes a polynomial decoder possible.  For example, a
lift taking values in $\{-1,0,1\}$ has interval width at most two. Interval
width is different from the number of distinct values of $Z$. An arbitrary
constant integer may be subtracted from both $P$ and $Z$ without changing the
degree or error. Subtracting a polynomial that is integer-valued on the Boolean
cube from both $P$ and $Z$ also preserves the error, but may change both the
degree and the lift width. In that case, our bounds use the degree of the
modified polynomial and the width of the modified lift.

For the explicit-construction question, we instead fix the degree and ask how
small the error can be. We therefore define
\[
 E_d(f)=\inf_{\deg P\le d}
       \max_{x\in\{0,1\}^n}
       \left\|P(x)-\frac{f(x)}2\right\|_{\mathbb T}.
\]
Half the multilinear extension of $f$ gives an exact torus representation of
degree at most $n$, so $E_n(f)=0$. The quantity $E_{n-1}(f)$ measures whether
reducing this degree bound by one forces a positive error, and how large that
error must be. We will determine this quantity exactly and construct a
polynomial attaining it.\footnote{The formal definitions of real and torus
approximation appear in Appendix~\ref{app:lean},
Listing~\ref{lst:approximation-definitions}.}

\subsection{The classical symmetric-function characterization}

Paturi's theorem~\cite{paturi} determines the real approximate degree of any
symmetric Boolean function up to constant factors. The key feature is the
location of the function's changes in value as the Hamming weight increases. A
change near the middle of the Boolean cube forces a larger approximate degree
than changes confined to weights near the endpoints.

A symmetric Boolean function depends only on the number of ones in its
input. Its \emph{profile} is the function $F:\{0,\ldots,n\}\to\{0,1\}$, where
$F(k)$ is the common value of $f$ on inputs of Hamming weight $k$. Thus
$f(x)=F(|x|)$. We write $K_F$ for the set of \emph{transition weights}.
A weight $k$ belongs to this set when the values at
weights $k$ and $k+1$ differ, so
\[
 K_F=\{k\in\{0,\ldots,n-1\}:F(k)\ne F(k+1)\}.
\]
For a nonconstant profile, define
\[
 \Gamma(F)=\min_{k\in K_F}|2k+1-n|,
 \qquad r(F)=\max_{k\in K_F}\min\{k+1,n-k\}.
\]
These parameters describe how close a transition lies to the middle of the
weight grid. The quantity $\Gamma(F)$ is twice the distance from $n/2$ to the
nearest transition midpoint $k+1/2$. For a transition at $k$, there are $k+1$
grid points on its left and $n-k$ on its right.  The parameter $r(F)$ is the
largest size of the smaller of these two sides. Thus a transition near the
middle gives a small $\Gamma(F)$ and a large $r(F)$. If $F(0)=\cdots=F(n)$, then
$f$ is constant and there are no transitions. In this case, set $\Gamma(F)=n$
and $r(F)=0$.

Paturi's characterization states that there are absolute constants $c,C>0$,
independent of $n$ and $F$, such that
\begin{equation}\label{eq:paturi}
 c\sqrt{n(n-\Gamma(F))}\ \leq\ \widetilde{\deg}_{1/3}(f)
 \ \leq\ C\sqrt{n(n-\Gamma(F))}.
\end{equation}
For a nonconstant profile, the identity $n-\Gamma(F)=2r(F)-1$ gives the
equivalent expression $\widetilde{\deg}_{1/3}(f)=\Theta(\sqrt{nr(F)})$.  For
constant functions, both expressions are zero, as required.  Our conventions
also include the zero-variable case.  For Majority on $n\geq1$ variables, the
transition lies at the middle of the weight grid, so $\Gamma(F)\leq1$ and
$r(F)=\lceil n/2\rceil$.  Consequently, Majority has real approximate degree
$\Theta(n)$.  This is the lower bound used in our degree--range tradeoff.

We outline our proof of Paturi's characterization~(\ref{eq:paturi}), beginning
with the reduction to one variable. A symmetric function $f$ has a real
polynomial approximation of degree at most $d$ and error $\varepsilon$ if and
only if there is a univariate real polynomial $q$ of degree at most $d$ such
that
\[
 |q(k)-F(k)|\leq\varepsilon\qquad(0\leq k\leq n).
\]
Here the approximation is required only at integer weights.  For the forward
direction, average the multivariate polynomial over all inputs of each weight
$k$. Averaging a squarefree monomial of degree $s$ gives
$\binom{k}{s}/\binom{n}{s}$, which is a polynomial in $k$ of degree at most
$s$. Averaging also preserves the error bound because $f$ is constant on each
weight layer. For the reverse direction, substitute $|x|=\sum_i x_i$ into $q$,
preserving both the degree and the error.\footnote{The univariate approximation
predicate and the formal symmetrization statement appear in
Appendix~\ref{app:lean}, Listings~\ref{lst:univariate-approximation}
and~\ref{lst:symmetrization}.}

For the upper bound, interpolate the profile smoothly between integer weights
and make the substitution $k=n(1+\cos\theta)/2$. The resulting function
$g(\theta)$ is even and $2\pi$-periodic. Its rate of change is bounded by a
Lipschitz constant $L=O(\sqrt{nr(F)})$, meaning that $|g(\theta)-g(\phi)|\leq
L|\theta-\phi|$.  A Jackson approximation, obtained by convolution with the
normalized fourth power of a Dirichlet kernel, gives an even trigonometric
polynomial of degree at most $4m$ and error $O(L/m)$ for $m\geq1$.  Taking $m$
to be a sufficiently large constant multiple of $\sqrt{nr(F)}$ makes the error
at most $1/3$ for a nonconstant profile.  An even trigonometric polynomial is a
polynomial in $\cos\theta$.  Substituting $\cos\theta=2k/n-1$ therefore gives
the required univariate polynomial of degree $O(\sqrt{nr(F)})$. Constant
profiles are approximated exactly by constant polynomials.

For the lower bound, we follow the explicit dual-polynomial approach of Bun and
Thaler~\cite{bt}. The aim is to assign signed weights to the integer grid so
that their weighted sum against every polynomial below the target degree is
zero, while their weighted sum against $F$ is too large to be consistent with a
good approximation. The construction uses reciprocals of products of differences
between selected grid points.  These weights annihilate low-degree
polynomials. Choosing the grid around a transition that attains $r(F)$
concentrates enough weight near the change in $F$ to obtain a lower bound at a
smaller constant error.  A constant-degree error-reduction step extends this
bound to error $1/3$.  For transitions near the endpoints, a discrete Markov
argument supplies the remaining cases. Together these arguments give
$\widetilde{\deg}_{1/3}(f)=\Omega(\sqrt{nr(F)})$.  The characterization,
including the approximation estimates and boundary cases, is formally verified
in Lean.\footnote{The formal statement appears in Appendix~\ref{app:lean},
Listing~\ref{lst:paturi-characterization}.}

\subsection{The full-cube coefficient}

We next introduce the coefficient that controls our explicit construction of
torus approximations of degree at most $n-1$. Let $n\geq1$ and let
$f:\{0,1\}^n\to\{0,1\}$ be any Boolean function. No symmetry assumption is
needed here. Write $p_f:\mathbb R^n\to\mathbb R$ for its unique multilinear
extension, so that $p_f(x)=f(x)$ at every Boolean input.  Then $p_f/2$ represents
$f/2$ exactly on the torus and has degree at most $n$. A multilinear polynomial
has only one possible monomial of degree $n$, namely $x_{[n]}=x_1\cdots x_n$. To
reduce the degree by one, we therefore need to cancel the coefficient of this
monomial while controlling the error modulo integers.

The coefficients of $p_f$ are obtained by inclusion--exclusion from the values of
$f$ on Boolean inputs,
\begin{equation}\label{eq:mobius}
 p_f(x)=\sum_{S\subseteq[n]}a_Sx_S,
 \qquad
 a_S=\sum_{U\subseteq S}(-1)^{|S|-|U|}f(1_U).
\end{equation}
Every coefficient $a_S$ is an integer because every value $f(1_U)$ is zero or
one. We call the coefficient of the monomial involving all $n$ variables the
\emph{full-cube coefficient} and denote it by
\[
 c_f=a_{[n]}.
\]
Its coefficient in $p_f/2$ is $c_f/2$. Subtracting an integer multiple of
$x_{[n]}$ preserves the values modulo integers on the Boolean cube.  Thus an
even $c_f$ allows this coefficient to be canceled with no torus error. When
$c_f$ is odd, a half-integer coefficient remains after such an integer
correction.  To handle this remaining coefficient, define the alternating-sign
polynomial
\[
 \chi(x)=\prod_{i=1}^n(2x_i-1).
\]
On the Boolean cube, each factor is $-1$ or $1$, so $\chi(x)=(-1)^{n-|x|}$ and
$|\chi(x)|=1$. Thus $\chi$ is $(-1)^n$ times the usual parity character
$(-1)^{|x|}$. This sign convention makes its full-cube coefficient $2^n$. A
small multiple of $\chi$ can therefore cancel the remaining coefficient while
introducing an error of the same magnitude at every input. The same signs also
express $c_f$ as an alternating sum of the values of $f$,
\begin{equation}\label{eq:top-parity}
 c_f=\sum_x\chi(x)f(x),
 \qquad
 c_f\equiv |f^{-1}(1)|\pmod2.
\end{equation}
The congruence holds because $1$ and $-1$ are equal modulo two. Hence whether
$c_f$ is even or odd depends only on the parity of the number of accepting
inputs.  This alternating sum also detects the obstruction to a smaller error.
For every polynomial $Q$ of degree below $n$,
\begin{equation}\label{eq:annihilation}
 \sum_x\chi(x)Q(x)=0.
\end{equation}
Indeed, each monomial omits a variable. Pairing inputs that differ only in that
variable leaves the monomial unchanged and reverses the sign of $\chi$, so their
contributions cancel. Thus the alternating sum vanishes for every candidate of
degree at most $n-1$, while its value on the target $f/2$ is $c_f/2$. Integer
lifts can change this sum only by an integer. When $c_f$ is odd, the resulting
half-integer discrepancy forces a positive approximation error. In
Theorem~\ref{thm:explicit}, we use this identity to prove that the correction
based on $\chi$ attains the smallest possible error.

\section{The degree cost of a bounded integer lift}

This section shows how restricting the range of an integer lift limits the power
of torus approximation. We first convert an approximation with lift width $B$
into an ordinary real approximation, increasing the degree by a factor of
$O(B+1)$. Paturi's theorem then gives a degree lower bound for
Majority. Finally, we sharpen the conversion for lifts taking values in
$\{-1,0,1\}$ and use it to obtain the dual witnesses requested in Open Problem 2
of Krishan and Vishwanathan~\cite{kv}.

\subsection{A conversion to real approximation}

Suppose $P$ approximates $Z+f/2$ and the lift $Z$ takes values between $a$ and
$a+B$. After subtracting $a$ and doubling, the value of $P$ is close to an
integer between $0$ and $2B+1$. This integer is even when $f(x)=0$ and odd when
$f(x)=1$. We therefore seek a univariate polynomial that recovers these two
Boolean values from a small neighborhood of each integer. Composing it with
$2(P-a)$ will give an ordinary real approximation of $f$.

The function $g(t)=(1-\cos(\pi t))/2$ has exactly the desired values at
integers. Lemma~\ref{lem:decoder} replaces $g$ by a polynomial on the interval
containing all relevant inputs. The extra $1/4$ at either end accommodates the
error after doubling. The key estimate is that the decoder's degree grows only
linearly with $B+1$.

\begin{lemma}[A polynomial decoder for bounded integers]\label{lem:decoder}
For every integer $B\ge0$, there is a real polynomial $H_B$ of degree at most
$16(B+1)-1$ such that, on $I_B=[-1/4,\,2B+5/4]$,
\[
 \left|H_B(t)-\frac{1-\cos(\pi t)}2\right|<\frac1{32}.
\]
\end{lemma}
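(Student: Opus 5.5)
Truncate the Taylor series of $\cos$, centered at the midpoint of $I_B$, and bound the remainder. Set $c=B+1/2$, the midpoint of $I_B$, so that every $t\in I_B$ satisfies $|t-c|\le B+3/4$. Then $\pi|t-c|\le \pi(B+3/4)=:R$. Write
\[
 \cos(\pi t)=\cos(\pi c)\cos(\pi(t-c))-\sin(\pi c)\sin(\pi(t-c)).
\]
Since $c$ is a half-integer, $\cos(\pi c)=0$ and $\sin(\pi c)=\pm1$, so $g(t)=(1-\cos\pi t)/2=\tfrac12\pm\tfrac12\sin(\pi(t-c))$. I will take $H_B(t)=\tfrac12\pm\tfrac12 T_N(\pi(t-c))$, where $T_N$ is the Taylor polynomial of $\sin$ of degree $N=16(B+1)-1$. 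This $N$ is odd, which suits the odd series of $\sin$. The polynomial $H_B$ then has degree at most $N$, and
\[
 |H_B(t)-g(t)|\le \tfrac12\,\frac{R^{N+1}}{(N+1)!}
\]
by the Lagrange remainder, using $|\sin^{(N+1)}|\le1$.

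It remains to check that $\tfrac12 R^{M}/M!<1/32$, where $M=N+1=16(B+1)$ and $R=\pi(B+3/4)<\pi(B+1)=\pi M/16$. Using $M!\ge (M/e)^M$, the ratio satisfies
\[
 \frac{R^M}{M!}\le\left(\frac{e\pi}{16}\right)^M.
\]
Since $e\pi\approx 8.54<16$, the base is below $0.54$. With $M\ge16$, the ratio is at most $0.54^{16}\approx 5\times10^{-5}$, far below $1/16$. The constant $16$ is generous, so even crude factorial bounds suffice. For a formalization-friendly version, I would use $M!\ge (M/3)^M$ instead, which still gives the base $3\pi/16<0.6$.

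The main obstacle is this numerical remainder estimate, which must hold uniformly in $B$. It reduces to the clean inequality $M!\ge (M/e)^M$, or the cruder $M!\ge(M/3)^M$, together with a single numeric check at $M=16$. The bound then only improves as $M$ grows, because the base is a fixed number less than one raised to the $M$-th power. Centering at the half-integer $c$ avoids expanding about an endpoint, which would double the radius to about $2\pi(B+1)$. With that doubled radius the base would become $e\pi/8>1$, and the argument would fail at this degree. Choosing the center at the midpoint is therefore the one design decision that matters. Everything else is routine: the polynomial has real coefficients, the degree count is exact, and the strict inequality follows from the slack.
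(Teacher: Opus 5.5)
Your proposal is correct and matches the paper's proof: the paper also takes the Taylor polynomial of $\cos(\pi t)$ of degree $16(B+1)-1$ about the midpoint $B+1/2$, and bounds the Lagrange remainder by $(e\pi/16)^{m}$ using $m!\ge(m/e)^m$. Rewriting $\cos(\pi t)$ through the angle-addition formula as $\pm\sin(\pi(t-c))$ yields exactly the same polynomial, so the difference is cosmetic; the paper closes with the cruder estimates $e\pi/16<3/4$ and $(3/4)^{16}<1/16$.
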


\begin{proof}
Set $m=16(B+1)$ and $t_0=B+1/2$. Let $C_m$ be the Taylor polynomial of
$\cos(\pi t)$ at $t_0$, through degree $m-1$, and put $H_B=(1-C_m)/2$.
For $t\in I_B$ we have $|t-t_0|\le B+3/4\le B+1$. All $m$th derivatives
of $\cos(\pi t)$ have absolute
value at most $\pi^m$. Taylor's theorem and $m!\ge(m/e)^m$ give, since $m\ge16$,
\[
 |\cos(\pi t)-C_m(t)|
 \le\frac{(\pi(B+1))^m}{m!}
 \le\left(\frac{e\pi}{16}\right)^m
 <\left(\frac34\right)^m\le\left(\frac34\right)^{16}<\frac1{16}.
\]
Here $e<3$ and $\pi<4$ suffice. Dividing the error by two proves the
claim.\footnote{The formal statement appears in Appendix~\ref{app:lean},
Listing~\ref{lst:bounded-decoder}.}
\end{proof}

The decoder gives the following conversion for any Boolean function.  Its degree
multiplies the degree of $P$, and its approximation error combines with the
error in $P$ to remain below $1/3$.  The resulting bound depends on the width
$B$ of the lift's range, independently of the interval's location $a$.

\begin{theorem}[Bounded-lift conversion]\label{thm:bounded-lift}
Let $f:\{0,1\}^n\to\{0,1\}$, $a\in\mathbb Z$, $B\in\mathbb N$, and
$0\le\varepsilon\le1/8$. Suppose a real polynomial $P$ of degree at most $d$ and
an integer-valued function $Z$ satisfy, for every $x$,
\[
 a\le Z(x)\le a+B,
 \qquad
 \left|P(x)-Z(x)-\frac{f(x)}2\right|\le\varepsilon.
\]
Then
\begin{equation}\label{eq:degree-range}
 \widetilde{\deg}_{1/3}(f)\le(16(B+1)-1)d.
\end{equation}
\end{theorem}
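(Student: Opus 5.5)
The plan is to compose the decoder $H_B$ of Lemma~\ref{lem:decoder} with the shifted and doubled torus polynomial. Define
\[
 Q(x)=H_B\bigl(2(P(x)-a)\bigr).
\]
Since $H_B$ has degree at most $16(B+1)-1$ and $2(P-a)$ has degree at most $d$, the composition has degree at most $(16(B+1)-1)d$. If desired, we may multilinearize $Q$ on the cube without increasing its degree. It then suffices to show that $|Q(x)-f(x)|\le 1/3$ at every Boolean $x$, which gives~(\ref{eq:degree-range}) directly from the definition of $\widetilde{\deg}_{1/3}$.

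Fix $x$ and set $t=2(P(x)-a)$ and $k=2(Z(x)-a)+f(x)$. The hypothesis gives $|t-k|=2|P(x)-Z(x)-f(x)/2|\le 2\varepsilon\le 1/4$. Since $0\le Z(x)-a\le B$ and $f(x)\in\{0,1\}$, the integer $k$ lies in $\{0,\ldots,2B+1\}$. Hence $t\in[-1/4,\,2B+5/4]=I_B$, and Lemma~\ref{lem:decoder} applies at $t$, giving
\[
 \Bigl|H_B(t)-\tfrac{1-\cos(\pi t)}2\Bigr|<\tfrac1{32}.
\]
Moreover $k\equiv f(x)\pmod 2$, so $g(k)=(1-\cos(\pi k))/2=f(x)$, where $g(t)=(1-\cos(\pi t))/2$.

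It remains to control $|g(t)-g(k)|$. The derivative satisfies $|g'|=\tfrac{\pi}{2}|\sin(\pi t)|\le\pi/2$, so $|g(t)-g(k)|\le \tfrac{\pi}{2}\cdot\tfrac14=\pi/8<0.393$. This is too crude when added to $1/32$. The sharper bound uses the fact that $k$ is an integer, where $g'$ vanishes. We have $|g(t)-g(k)|=\tfrac{1-\cos(\pi|t-k|)}2=\sin^2(\pi|t-k|/2)\le\sin^2(\pi/8)=(1-\cos(\pi/4))/2=(2-\sqrt2)/4<0.147$. Combining, $|Q(x)-f(x)|<0.147+1/32<0.18<1/3$, which proves the theorem.

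The only delicate step is this last error budget. A naive Lipschitz estimate fails, and the exact identity $1-\cos(\pi(k+s))=1-\cos(\pi s)$ (up to the parity of $k$), equivalently the vanishing of $g'$ at integers, is what keeps the error well below $1/3$. The remaining steps, namely the range check $t\in I_B$ and the degree count for the composition, are routine.
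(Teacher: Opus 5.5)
Your proposal is correct and follows essentially the same route as the paper. You compose the decoder $H_B$ with $2(P-a)$, check that each value lies in $I_B$ within $1/4$ of an integer of parity $f(x)$, and use the exact identity $|g(w+h)-g(w)|=(1-\cos(\pi h))/2$ at integers $w$; the only difference is that the paper bounds this quantity by $\pi^2h^2/4\le\pi^2/64$, whereas you evaluate it as $\sin^2(\pi/8)$, and both leave ample room below $1/3$.
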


\begin{proof}
Put $Q=2(P-a)$ and $W=2(Z-a)+f$. The integer $W(x)$ lies in $[0,2B+1]$, has
parity $f(x)$, and satisfies $|Q(x)-W(x)|\le2\varepsilon\le1/4$.  Thus $Q(x)\in
I_B$. The function $g(t)=(1-\cos(\pi t))/2$ agrees with parity at integer
points. For any integer $w$ and real $h$,
\[
 |g(w+h)-g(w)|
 =\frac{1-\cos(\pi h)}2
 \le\frac{\pi^2h^2}{4}.
\]
Applying Lemma~\ref{lem:decoder} gives
\[
 |H_B(Q(x))-f(x)|
 <\frac1{32}+\pi^2\varepsilon^2
 \le\frac1{32}+\frac{\pi^2}{64}
 <\frac9{32}<\frac13.
\]
The composition has degree at most $(16(B+1)-1)d$.  Multilinearization preserves
its Boolean values, proving (\ref{eq:degree-range}).\footnote{The formal
statement appears in Appendix~\ref{app:lean},
Listing~\ref{lst:bounded-conversion}.}
\end{proof}

For Majority, the real approximation produced by Theorem~\ref{thm:bounded-lift}
must have degree $\Omega(n)$ by Paturi's theorem. Thus the original degree and
the lift width cannot both be small. The next corollary makes this tradeoff
explicit, even when the allowed torus error is the constant $1/8$.

\begin{corollary}[Majority degree--range tradeoff]\label{cor:range}
There is an absolute constant $c>0$ such that every approximation in
Theorem~\ref{thm:bounded-lift} with $f=\mathsf{MAJ}_n$ satisfies
\[
 d(B+1)\ge cn.
\]
In particular, constant-width lifts require degree $\Omega(n)$ even at constant
error $1/8$.
\end{corollary}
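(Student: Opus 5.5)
The plan is to combine Theorem~\ref{thm:bounded-lift} with the lower-bound half of Paturi's characterization~(\ref{eq:paturi}). First I will check that $\mathsf{MAJ}_n$ is symmetric with profile $F(k)=\mathbf{1}[k>n/2]$. Its only transition is at $k=\lfloor n/2\rfloor$, which gives $\Gamma(F)\le1$ and $r(F)=\lceil n/2\rceil$, as already noted in the preliminaries. Paturi's lower bound then gives
\[
 \widetilde{\deg}_{1/3}(\mathsf{MAJ}_n)\ \ge\ c_0\sqrt{n(n-\Gamma(F))}\ \ge\ c_0\sqrt{n(n-1)}
\]
for the absolute constant $c_0>0$ from~(\ref{eq:paturi}).

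Next, Theorem~\ref{thm:bounded-lift} applies to any approximation in its hypotheses with $f=\mathsf{MAJ}_n$. It yields
\[
 \widetilde{\deg}_{1/3}(\mathsf{MAJ}_n)\le(16(B+1)-1)d<16(B+1)d.
\]
Combining the two inequalities gives $d(B+1)>c_0\sqrt{n(n-1)}/16$.

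The main obstacle is the small cases: the bound $\sqrt{n(n-1)}\ge n/\sqrt2$ holds only for $n\ge2$. For $n\ge2$ it gives $d(B+1)\ge c_0 n/(16\sqrt2)$.

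For $n=1$ we have $\mathsf{MAJ}_1(x)=x$, which is nonconstant. I will show $d\ge1$ directly. If $d=0$, then $P$ is a constant $p$. The hypothesis gives $|p-Z(0)|\le1/8$ and $|p-Z(1)-1/2|\le1/8$. Hence $|Z(0)-Z(1)-1/2|\le1/4$, which is impossible for integers. So $d(B+1)\ge1=n$ in this case. (Alternatively, Paturi gives $\widetilde{\deg}_{1/3}(x_1)\ge1$ via $r(F)=1$.)

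Taking $c=\min\{1,\,c_0/(16\sqrt2)\}$ covers all $n\ge1$. If $n=0$ were allowed, the bound is trivial.

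For the "in particular" clause, fix $B$ as a constant and $\varepsilon=1/8$. The inequality then gives $d\ge cn/(B+1)=\Omega(n)$.
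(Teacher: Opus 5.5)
Your proposal is correct and follows the paper's route: the paper's proof combines the linear bound $\widetilde{\deg}_{1/3}(\mathsf{MAJ}_n)=\Omega(n)$ from Paturi's theorem with Theorem~\ref{thm:bounded-lift}, using the formalized constant $n/10000$ to get $c=1/160000$, and you have simply made the $\Omega(n)$ step and the case $n=1$ explicit. One cosmetic fix: the inequality $(16(B+1)-1)d<16(B+1)d$ fails when $d=0$, so state the combined bound as $d(B+1)\ge c_0\sqrt{n(n-1)}/16$; the non-strict form is all your argument needs.
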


\begin{proof}
Paturi's theorem gives
$\widetilde{\deg}_{1/3}(\mathsf{MAJ}_n)=\Omega(n)$~\cite{paturi}.
Apply Theorem~\ref{thm:bounded-lift}.
\end{proof}

Our Lean formalization proves the Majority approximate-degree lower bound with
explicit constant $1/10000$, so the degree--range tradeoff holds with
$c=1/160000$.  For fixed $B$, this gives $d\ge n/(160000(B+1))$. These constants
are not optimized.\footnote{The formal statements appear in
Appendix~\ref{app:lean}, Listings~\ref{lst:paturi-input}
and~\ref{lst:majority-range}.}

The tradeoff excludes any such approximation with $d(B+1)<cn$, whether $B$ is
fixed or grows with $n$. For fixed $B$, the linear order of the degree lower
bound is best possible.  The ordinary multilinear extension of $f/2$ is an exact
degree-at-most-$n$ torus representation with lift zero.

\noindent\textbf{Remark (Sharpness).}
For PARITY, $P(x)=|x|/2$ is an exact torus representation with lift
$Z(x)=\lfloor|x|/2\rfloor$. Here $d(B+1)=\lfloor n/2\rfloor+1$, so the
general tradeoff is sharp up to constants.

\subsection{Three-valued lifts and dual witnesses}

Open Problem 2 concerns lifts taking values only in $\{-1,0,1\}$ and error at
most $1/(20n)$. Such a lift has width two, so the general conversion already
gives a real approximation of degree at most $47d$. Because the lift takes only
three values, $2Z(x)+f(x)$ belongs to the six-point set $\{-2,-1,0,1,2,3\}$.
Interpolating parity exactly on these points gives a decoder of degree five.
Bounding its derivative controls the error near these points, yielding a real
approximation of degree at most $5d$.  Theorem~\ref{thm:ternary} records this
improvement for Majority.

\begin{theorem}[A quintic decoder for the ternary lift]\label{thm:ternary}
Suppose $n\ge40$, and a real polynomial $P$ with $\deg P\le d$ and a function
$Z:\{0,1\}^n\to\{-1,0,1\}$ satisfy
\[
 \left|P(x)-Z(x)-\frac{\mathsf{MAJ}_n(x)}2\right|
 \le\frac1{20n}
 \quad\hbox{for all }x.
\]
Then $\widetilde{\deg}_{1/3}(\mathsf{MAJ}_n)\le5d$.
\end{theorem}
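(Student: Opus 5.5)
We follow the proof of Theorem~\ref{thm:bounded-lift}, replacing the Taylor decoder by an exact interpolant. Put $Q=2P$ and $W=2Z+\mathsf{MAJ}_n$. Then $W(x)$ lies in the six-point set $S=\{-2,-1,0,1,2,3\}$, has parity $\mathsf{MAJ}_n(x)$, and satisfies $|Q(x)-W(x)|\le 2/(20n)=1/(10n)\le 1/400$ for $n\ge40$. Let $D$ be the unique polynomial of degree at most five with $D(w)=w\bmod 2$ for every $w\in S$. The candidate real approximation is $D(Q(x))$, of degree at most $5d$, multilinearized at the end. Since $|D(Q(x))-\mathsf{MAJ}_n(x)|=|D(Q(x))-D(W(x))|$, it suffices to show $|D(Q(x))-D(W(x))|\le 1/3$.

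To bound this, we compute $D$ explicitly. Shifting by $1/2$, set $u=t-1/2$. Then the nodes are $\pm1/2,\pm3/2,\pm5/2$, and the target values are $0,1,0,1,0,1$ at $u=-5/2,\dots,5/2$. This target is $1/2$ plus an odd function of $u$, so $D(t)=1/2+R(u)$ with $R$ an odd polynomial $\alpha u+\beta u^3+\gamma u^5$. The coefficients are determined by $R(1/2)=1/2$, $R(3/2)=-1/2$, and $R(5/2)=1/2$. We would solve this $3\times3$ linear system, or equivalently write $D$ in Newton form using divided differences of the alternating sequence.

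Next we bound $|D'|$ on the union of the intervals $[w-1/400,w+1/400]$ for $w\in S$. This reduces to bounding $|R'(u)|$ for $|u|\le 5/2+1/400$. Crudely, $|R'(u)|\le |\alpha|+3|\beta|(2.51)^2+5|\gamma|(2.51)^4$, which is some explicit constant $L$, plausibly of order ten. The mean value theorem then gives $|D(Q(x))-D(W(x))|\le L/400$, and this should be far below $1/3$. The hypothesis $n\ge40$ is exactly what makes the perturbation $1/(10n)$ this small, so even a loose $L$ suffices.

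The main obstacle is this explicit derivative estimate, which is a routine but fiddly computation with rational coefficients. We expect the sign alternation to make the coefficients moderately large. We would first solve for $\alpha,\beta,\gamma$ exactly, then apply the crude triangle-inequality bound. If that is not tight enough, we would instead bound $|D'|$ near each node separately. With the estimate in hand, the inequality $\widetilde{\deg}_{1/3}(\mathsf{MAJ}_n)\le\deg(D\circ Q)\le5d$ follows immediately.
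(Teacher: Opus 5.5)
Your proposal is correct and is essentially the paper's proof: by uniqueness your interpolant $D$ is exactly the paper's quintic $H(t)=\frac{2t^5-5t^4-10t^3+20t^2+8t}{15}$, and the paper likewise bounds the derivative crudely ($|H'|<130$ on $[-2.1,3.1]$), then applies the mean value theorem to get error at most $13/n<1/3$. Solving your system gives $R(u)=\frac{149}{120}u-u^3+\frac{2}{15}u^5$, so your centered triangle-inequality bound yields $L<47$ and error below $47/400<1/3$, a somewhat better constant than the paper's along the same route.
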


\begin{proof}
Let $Q=2P$ and $W=2Z+\mathsf{MAJ}_n$. Define
\[
 H(t)=\frac{2t^5-5t^4-10t^3+20t^2+8t}{15}.
\]
Its values at $-2,-1,0,1,2,3$ are respectively $0,1,0,1,0,1$.
Thus $H(W(x))=\mathsf{MAJ}_n(x)$.
Since $|Q-W|\le1/(10n)$, all relevant values lie in $[-2.1,3.1]$.
On that interval,
\[
 |H'(t)|\le
 \frac{10(3.1)^4+20(3.1)^3+30(3.1)^2+40(3.1)+8}{15}
 =\frac{646547}{5000}<130.
\]
The mean-value theorem gives
$|H(Q(x))-\mathsf{MAJ}_n(x)|\le13/n<1/3$.
The multilinearization of $H\circ Q$ has degree at most $5d$.
\end{proof}

The same argument applies to any Boolean function $f$, giving
$\widetilde{\deg}_{1/3}(f)\le5d$ under the same approximation
hypotheses.\footnote{The formal statement appears in Appendix~\ref{app:lean},
Listing~\ref{lst:ternary-decoder}.}

Theorem~\ref{thm:ternary} rules out the proposed torus approximation whenever
$5d<\widetilde{\deg}_{1/3}(\mathsf{MAJ}_n)$.  To answer the open problem in its
stated form, we turn this nonexistence result into a dual witness for each fixed
lift $Z$.  Let $V_d$ be the real subspace of functions on $\{0,1\}^n$
represented by degree-at-most-$d$ polynomials. For real functions $u,v$ on the
cube, write $\langle u,v\rangle=\sum_xu(x)v(x)$ and $\|u\|_1=\sum_x|u(x)|$.

A witness $\gamma\in V_d^\perp$ assigns signed weights to Boolean inputs so that
$\langle Q,\gamma\rangle=0$ for every $Q\in V_d$.  If such a $Q$ approximated
the target $Z+\mathsf{MAJ}_n/2$ with pointwise error at most $1/(20n)$, then the
absolute weighted sum of that target would be at most $\|\gamma\|_1/(20n)$.  The
strict inequality in Corollary~\ref{cor:dual} certifies that this is
impossible. Finite-dimensional separation guarantees the existence of a witness
once the target is too far from $V_d$.

\begin{corollary}[Finite witnesses for three-valued lifts]\label{cor:dual}
If $n\ge40$ and $5d<\widetilde{\deg}_{1/3}(\mathsf{MAJ}_n)$, then for
every $Z:\{0,1\}^n\to\{-1,0,1\}$ there is a real function
$\gamma\in V_d^\perp$ such that
\[
 \left|\left\langle Z+\frac{\mathsf{MAJ}_n}{2},\gamma\right\rangle\right|
 >\frac1{20n}\|\gamma\|_1.
\]
\end{corollary}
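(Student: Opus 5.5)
The argument combines Theorem~\ref{thm:ternary} with finite-dimensional linear programming duality in the $\ell_\infty$ norm. Fix $Z:\{0,1\}^n\to\{-1,0,1\}$ and set $T=Z+\mathsf{MAJ}_n/2$, a vector in $\mathbb R^{2^n}$. First we show that $T$ is far from $V_d$ in the sup norm. Suppose some $Q\in V_d$ satisfied $\|T-Q\|_\infty\le 1/(20n)$. Then a polynomial $P$ of degree at most $d$ representing $Q$ on the cube would meet the hypotheses of Theorem~\ref{thm:ternary}, since $n\ge40$. That theorem would give $\widetilde{\deg}_{1/3}(\mathsf{MAJ}_n)\le 5d$, contradicting the assumption. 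Hence
\[
 \operatorname{dist}_\infty(T,V_d)=\min_{Q\in V_d}\|T-Q\|_\infty>\frac1{20n}.
\]
The minimum is attained because $V_d$ is a finite-dimensional subspace and the distance function is continuous and coercive.

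Second, we convert this distance into a witness using the standard duality between $\ell_\infty$ and $\ell_1$. For a subspace $V$ of $\mathbb R^N$ and any $T$,
\[
 \operatorname{dist}_\infty(T,V)=\max\{\langle T,\gamma\rangle:\gamma\in V^\perp,\ \|\gamma\|_1\le1\}.
\]
The inequality ``$\ge$'' is immediate. For any $Q\in V$ and any such $\gamma$,
\[
 \langle T,\gamma\rangle=\langle T-Q,\gamma\rangle\le\|T-Q\|_\infty\|\gamma\|_1\le\|T-Q\|_\infty .
\]
For the reverse inequality, let $\delta$ be the distance and apply the Hahn--Banach/separation theorem. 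The open sup-norm ball of radius $\delta$ around $T$ is convex and disjoint from the subspace $V$. So there is a nonzero linear functional $\gamma$ that is bounded on $V$, and therefore vanishes on $V$, with $\langle T,\gamma\rangle\ge\delta\|\gamma\|_1$. Here the dual norm of $\ell_\infty$ is $\ell_1$. Alternatively, one can write the minimization as a linear program and take its dual, which is how Krishan and Vishwanathan phrase their framework.

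Applying this with $V=V_d$ gives a $\gamma\in V_d^\perp$ with $\|\gamma\|_1=1$ and $\langle T,\gamma\rangle=\operatorname{dist}_\infty(T,V_d)>1/(20n)$. This is exactly the required strict inequality, after scaling by $\|\gamma\|_1$.

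The main point needing care is the duality step: we must ensure that strictness transfers. This is handled by the attainment of the maximum in the dual, or simply by separating $V_d$ from the closed ball of radius $1/(20n)$, which is disjoint from $V_d$ by the first step. Compactness of the closed ball, together with closedness of $V_d$, then gives strict separation. So the witness has $\langle T,\gamma\rangle>\sup_{B}\langle\cdot,\gamma\rangle-\langle T,\gamma\rangle$ type bounds. Concretely, $\langle T,\gamma\rangle-\|\gamma\|_1/(20n)>\sup_{V_d}\langle\cdot,\gamma\rangle=0$. Everything else is a direct invocation of Theorem~\ref{thm:ternary}.
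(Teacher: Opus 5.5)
Your proposal is correct and follows essentially the same route as the paper. It attains the sup-norm distance from $Z+\mathsf{MAJ}_n/2$ to $V_d$, uses Theorem~\ref{thm:ternary} to show this distance exceeds $1/(20n)$, and then separates; your separation of the ball around the target from $V_d$ is just a translate of the paper's separation of the target from $V_d$ plus the ball. The strictness worry in your last paragraph is unnecessary, since $\langle T,\gamma\rangle=\delta\|\gamma\|_1$ with $\delta>1/(20n)$ and $\gamma\neq0$ already gives the strict inequality.
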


\begin{proof}
The distance from $Z+\mathsf{MAJ}_n/2$ to $V_d$ in the maximum norm is attained
because the ambient space is finite-dimensional and $V_d$ is closed.  If this
minimum were at most $1/(20n)$, a minimizing polynomial would, by
Theorem~\ref{thm:ternary}, imply $\widetilde{\deg}_{1/3}(\mathsf{MAJ}_n)\le5d$,
contrary to the hypothesis.  Thus the distance is strictly greater than
$1/(20n)$. Separating the target from the closed convex sum of $V_d$ and the
maximum-norm ball of radius $1/(20n)$ yields a nonzero functional $\gamma$
annihilating $V_d$.  The support function of that ball is $\|\gamma\|_1/(20n)$,
which gives the displayed inequality.
\end{proof}

The finite witness conclusion also holds for any Boolean function $f$ satisfying
$5d<\widetilde{\deg}_{1/3}(f)$, with $n\ge40$ as above.\footnote{The formal
statement appears in Appendix~\ref{app:lean}, Listing~\ref{lst:ternary-dual}.}

To obtain the degree range in Open Problem 2, we combine the finite corollary
with the explicit linear lower bound for Majority established in our
formalization of Paturi's theorem.

\begin{corollary}[Asymptotic witnesses for three-valued lifts]\label{cor:asymptotic-dual}
Let $d:\mathbb N\to\mathbb N$ satisfy $50000d(n)<n$ for all sufficiently
large $n$. Then, for all sufficiently large $n$ and every
$Z:\{0,1\}^n\to\{-1,0,1\}$, there exists $\gamma\in V_{d(n)}^\perp$ with
\[
 \left|\left\langle Z+\frac{\mathsf{MAJ}_n}{2},\gamma\right\rangle\right|
 >\frac1{20n}\|\gamma\|_1.
\]
In particular, this holds whenever $d(n)=O(\sqrt{n/\log n})$.
Both conclusions also hold for Majority with ties accepted.
\end{corollary}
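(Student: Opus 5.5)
The plan is to reduce Corollary~\ref{cor:asymptotic-dual} directly to Corollary~\ref{cor:dual}, using the explicit Majority lower bound from the formalization, $\widetilde{\deg}_{1/3}(\mathsf{MAJ}_n)\ge n/10000$. Fix $d$ with $50000d(n)<n$ for all $n\ge n_0$. For every $n\ge\max\{n_0,40\}$ we get
\[
 5d(n)<\frac{n}{10000}\le\widetilde{\deg}_{1/3}(\mathsf{MAJ}_n).
\]
This is exactly the hypothesis of Corollary~\ref{cor:dual}. That corollary then produces, for every $Z:\{0,1\}^n\to\{-1,0,1\}$, a witness $\gamma\in V_{d(n)}^\perp$ with the required strict inequality. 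I should double-check that the formal constant $1/10000$ is valid for all $n\ge1$ (or at least for all large $n$); if it only holds eventually, I will enlarge $n_0$ accordingly.

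For the ``in particular'' clause, suppose $d(n)\le C\sqrt{n/\log n}$ for large $n$. Since $C\sqrt{n/\log n}=o(n)$, we have $50000\,C\sqrt{n/\log n}<n$ once $n$ is large enough, namely once $\sqrt{n\log n}>50000C$. So the hypothesis holds eventually, and the first part applies.

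For Majority with ties accepted, I would use the transformation from the preliminaries rather than redo the separation. There are two options.

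\emph{First option.} Suppose $P$ approximates $Z+M_n/2$ with a lift in $\{-1,0,1\}$ and error $1/(20n)$. Then $P'(y)=1/2-P(\bar y)$ and $Z'(y)=-Z(\bar y)$ give a ternary-lift approximation of $\mathsf{MAJ}_n$ with the same degree and error. Theorem~\ref{thm:ternary} then forces $\widetilde{\deg}_{1/3}(\mathsf{MAJ}_n)\le5d$. Hence the max-norm distance from $Z+M_n/2$ to $V_d$ exceeds $1/(20n)$, and the separation step of Corollary~\ref{cor:dual} yields $\gamma$.

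\emph{Second option.} Transform the witness directly. Start from a witness $\gamma'$ for the lift $Z'$ and $\mathsf{MAJ}_n$, and set $\gamma(x)=\gamma'(\bar x)$. Complementation preserves $V_d$, so $\gamma\in V_d^\perp$, and $\|\gamma\|_1=\|\gamma'\|_1$. Writing $y=\bar x$, the inner product becomes
\[
 \langle Z+M_n/2,\gamma\rangle=\sum_y\gamma'(y)\bigl(1/2-Z'(y)-\mathsf{MAJ}_n(y)/2\bigr).
\]
The constant $1/2$ term must vanish. This needs $\sum_y\gamma'(y)=0$, which holds because constants lie in $V_d$. The sum is therefore $-\langle Z'+\mathsf{MAJ}_n/2,\gamma'\rangle$, and its absolute value is preserved.

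The only delicate point is this constant-term cancellation and the sign bookkeeping under complementation; everything else is direct substitution. The rest is just checking thresholds, so I expect no serious obstacle.
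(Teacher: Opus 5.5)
Your proof is correct and follows the paper's argument: the explicit bound $n/10000$ (which the formalization proves for every $n$, so no enlargement of $n_0$ is needed) feeds Corollary~\ref{cor:dual} for $n\ge\max\{n_0,40\}$, and the big-$O$ case follows from $d(n)/n\to0$. Your first option for ties-accepted Majority is exactly the paper's complement-then-separate step, while your second option, pulling the witness back via $\gamma(x)=\gamma'(\bar x)$ and using $\sum_y\gamma'(y)=0$ because constants lie in $V_{d(n)}$, is also valid and spares you from re-running the attainment and separation argument.
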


\begin{proof}
For every $n\ge40$ satisfying $50000d(n)<n$, the explicit lower bound gives
\[
 5d(n)<\frac{n}{10000}
 \le\widetilde{\deg}_{1/3}(\mathsf{MAJ}_n).
\]
Corollary~\ref{cor:dual} therefore supplies the required witness.  If
$d(n)=O(\sqrt{n/\log n})$, then $d(n)/n\to0$, since $\sqrt{n/\log
  n}/n=1/\sqrt{n\log n}\to0$. Hence $50000d(n)<n$ for all sufficiently large
$n$.  For ties-accepted Majority, the complement transformation with
$Z(x)=-Z'(\bar x)$ rules out approximating $Z'+M_n/2$ within $1/(20n)$ by an
element of $V_{d(n)}$, for every three-valued $Z'$.  The attainment and
separation argument of Corollary~\ref{cor:dual} then gives the required witness.
\end{proof}

The asymptotic statements are formalized for both tie conventions.\footnote{See
Appendix~\ref{app:lean}, Listings~\ref{lst:asymptotic-dual}
and~\ref{lst:linear-dual}.}  More generally, Theorem~\ref{thm:bounded-lift} and
the same separation argument apply at error $1/8$. For any Boolean function $f$
and integer lift $Z$ with values in $[a,a+B]$, where $a\in\mathbb Z$ and
$B\in\mathbb N$, if $(16(B+1)-1)d<\widetilde{\deg}_{1/3}(f)$, there exists
$\gamma\in V_d^\perp$ such that
\[
 \left|\left\langle Z+\frac f2,\gamma\right\rangle\right|
 >\frac18\|\gamma\|_1.
\]
This witness excludes uniform approximation to $Z+f/2$ by degree-at-most-$d$
polynomials at every error $\varepsilon\le1/8$.

\section{Explicit optimal approximation below full degree}

We now turn to Open Problem 5 of Krishan and Vishwanathan~\cite{kv}, which asks
for an explicit construction of torus approximations with exponentially small
error. Their Theorem 18 establishes that every Boolean function on $n\ge1$
variables admits an approximation of degree at most $n-1$ and error at most
$2^{-n-1}$. We give a formula that attains the smallest possible error for each
function. We then describe how to compute its coefficients from the truth table
and determine exactly which input lengths allow an exact representation of
Majority below full degree.

\subsection{Construction and optimality}

The construction implements the coefficient cancellation described in the
full-cube coefficient subsection. Starting with the exact representation $p_f/2$,
write its coefficient of $x_{[n]}$ as $k+b/2$, where $k$ is an integer and
$b\in\{0,1\}$. Subtracting $kx_{[n]}$ removes the integer part without changing
any value on the torus.  If $b=1$, subtracting $\chi/2^{n+1}$ removes the
remaining half while changing each Boolean value by only $2^{-n-1}$ in absolute
value.  If $b=0$, this second correction is unnecessary.

Theorem~\ref{thm:explicit} states the resulting formula and its optimality. The
alternating-sum obstruction from Lemma 14 of Krishan and Vishwanathan~\cite{kv},
applied to the full cube, gives the matching lower bound. Thus the new
construction attains the known obstruction. Because both correction terms are
invariant under coordinate permutations, it also preserves symmetry whenever $f$
is symmetric.

\begin{theorem}[An explicit polynomial approximation of degree at most $n-1$]\label{thm:explicit}
Let $n\ge1$, and let $p_f$ be the multilinear extension (\ref{eq:mobius}) of a
Boolean function $f$. Write $c_f=2k+b$, where $k\in\mathbb Z$ and $b\in\{0,1\}$.
Then
\begin{equation}\label{eq:explicit-new}
 P_f(x)=\frac{p_f(x)}2-kx_{[n]}-\frac{b}{2^{n+1}}
                       \prod_{i=1}^n(2x_i-1)
\end{equation}
has degree at most $n-1$ and satisfies, at every Boolean point,
\[
 \left\|P_f(x)-\frac{f(x)}2\right\|_{\mathbb T}
 =\frac{b}{2^{n+1}}.
\]
Moreover,
\begin{equation}\label{eq:minimax}
 E_{n-1}(f)=
 \left\{\begin{array}{ll}
 0,& |f^{-1}(1)|\hbox{ is even},\\[2pt]
 2^{-n-1},& |f^{-1}(1)|\hbox{ is odd}.
 \end{array}\right.
\end{equation}
If $f$ is symmetric, then $P_f$ is symmetric.
\end{theorem}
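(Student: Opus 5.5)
The proof has four parts: a degree check, an evaluation of the torus error, a matching lower bound from the alternating-sum identity, and a symmetry check.

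\emph{Degree.} Since $p_f$ is multilinear with top coefficient $c_f$, the coefficient of $x_{[n]}$ in $p_f/2$ is $c_f/2=k+b/2$. Subtracting $kx_{[n]}$ leaves $b/2$. Expanding $\chi(x)=\prod_i(2x_i-1)$ gives top coefficient $2^n$, so $\frac{b}{2^{n+1}}\chi$ has top coefficient $b/2$. Hence the coefficient of $x_{[n]}$ in $P_f$ is $b/2-b/2=0$. Every other monomial has degree at most $n-1$, so $\deg P_f\le n-1$.

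\emph{Error of $P_f$.} At a Boolean point we have $p_f(x)=f(x)$, $x_{[n]}\in\{0,1\}$ and $\chi(x)=\pm1$. Therefore
\[
 P_f(x)-\frac{f(x)}2=-kx_{[n]}\mp\frac{b}{2^{n+1}},
\]
which is an integer plus $\pm b2^{-n-1}$. Because $2^{-n-1}\le1/4<1/2$, the nearest integer is $-kx_{[n]}$. So the torus distance is exactly $b2^{-n-1}$, with the lift $Z(x)=-kx_{[n]}$.

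\emph{Minimax value.} By (\ref{eq:top-parity}), $b=0$ exactly when $|f^{-1}(1)|$ is even. So $P_f$ gives the upper bound in (\ref{eq:minimax}), and $E_{n-1}(f)\ge0$ is trivial. For the lower bound when $b=1$, take any $Q$ with $\deg Q\le n-1$ and maximal torus error $\varepsilon$, and choose a lift $Z$ with $|Q-Z-f/2|\le\varepsilon$. Pair both sides against $\chi$. By (\ref{eq:annihilation}), $\sum_x\chi Q=0$. By (\ref{eq:top-parity}), $\sum_x\chi f/2=c_f/2$, and $\sum_x\chi Z\in\mathbb Z$. This gives
\[
 \Big|\sum_x\chi(x)Z(x)+\frac{c_f}2\Big|
 =\Big|\sum_x\chi(x)\Big(Q(x)-Z(x)-\frac{f(x)}2\Big)\Big|
 \le2^n\varepsilon .
\]
Since $c_f$ is odd, the left side is a half-odd-integer, so it is at least $1/2$. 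Hence $\varepsilon\ge2^{-n-1}$.

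There is one subtlety. $E_{n-1}$ is defined as an infimum, but the bound holds for every $Q$, so the infimum is at least $2^{-n-1}$, and $P_f$ attains it.

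\emph{Symmetry.} If $f$ is symmetric, then $p_f$ is symmetric by uniqueness of the multilinear extension. The polynomials $x_{[n]}$ and $\chi$ are invariant under permutations of the coordinates, and $k,b$ are constants. Therefore $P_f$ is symmetric.

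I expect the main care point to be the torus-distance equality, in particular checking that the nearest integer is indeed $-kx_{[n]}$. This works because $2^{-n-1}\le 1/4$ for all $n\ge1$. The rest is bookkeeping.
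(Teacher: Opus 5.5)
Your proof is correct and follows the paper's argument essentially step for step. You cancel the top coefficient using $kx_{[n]}$ and $b\chi/2^{n+1}$, read off the error at Boolean points, get the matching lower bound by pairing against $\chi$ with the annihilation and alternating-sum identities, and obtain symmetry from uniqueness of $p_f$; your checks that $2^{-n-1}\le 1/4$ makes $-kx_{[n]}$ the nearest integer and that the per-$Q$ bound controls the infimum fill in small details the paper leaves implicit.
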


\begin{proof}
The coefficient of $x_{[n]}$ in (\ref{eq:explicit-new}) is
$c_f/2-k-b/2=0$. Every other monomial has degree below $n$.
At Boolean points,
\[
 P_f(x)-\frac{f(x)}2=-kx_{[n]}-\frac{b}{2^{n+1}}\chi(x).
\]
The first term is integer-valued and $\chi(x)\in\{-1,1\}$.  This proves the
pointwise error formula.

For completeness, suppose $b=1$ and a polynomial $Q$ of degree at most $n-1$ has
torus error at most $\varepsilon$. Choose an integer lift $Z$ for $Q$. By
(\ref{eq:annihilation}),
\[
 \left|\sum_x\chi(x)\left(Q(x)-Z(x)-\frac{f(x)}2\right)\right|
 =\left|\sum_x\chi(x)Z(x)+\frac{c_f}{2}\right|\ge\frac12.
\]
The left side is at most $2^n\varepsilon$. Thus $\varepsilon\ge2^{-n-1}$. When
$b=0$, the explicit construction has zero error. Equation (\ref{eq:top-parity})
gives the accepting-set formulation in (\ref{eq:minimax}).  Finally, the
uniqueness of $p_f$ implies its symmetry whenever $f$ is symmetric, and both
correction terms are invariant under coordinate permutations.\footnote{The
formal statements appear in Appendix~\ref{app:lean},
Listings~\ref{lst:explicit-formula}, \ref{lst:explicit-degree-error},
and~\ref{lst:explicit-optimal-symmetric}.}
\end{proof}

The theorem shows that reducing the degree from $n$ to $n-1$ has only two
possible effects on the optimal error. An even number of accepting inputs
permits an exact torus representation. An odd number forces error $2^{-n-1}$,
and the formula attains that error at every input. In particular, symmetry
introduces no additional cost for this construction.

To make the construction usable from a truth table, we can expand the formula
and compute every coefficient directly. The full-cube coefficient is zero by
construction. For $S\subsetneq[n]$, the coefficient of $x_S$ is
\[
 \frac{a_S}{2}
 -\frac{b(-1)^{n-|S|}}{2^{n-|S|+1}}.
\]
The Boolean M\"obius transform computes all the $a_S$ by applying successive
differences along each of the $n$ input coordinates.  Each pass uses $2^{n-1}$
integer subtractions, giving exactly $n2^{n-1}$ subtractions in total. The
displayed correction then yields all coefficients of $P_f$ in $O(n2^n)$
arithmetic operations.  This measures the work relative to a truth table with
$2^n$ entries.  The expanded polynomial can itself have exponentially many
coefficients.

The coefficients also have bounded exact representations. A common denominator
is $2^{n+1}$, and each integer numerator has absolute value at most
$2^{2n+1}$. Thus each coefficient can be stored using $O(n)$ bits for its
numerator and denominator. The operation count above treats each integer
subtraction as one operation, regardless of the number of bits in its
operands.\footnote{The formal statements appear in Appendix~\ref{app:lean},
Listings~\ref{lst:coefficient-size} and~\ref{lst:fast-mobius}.}

Two small examples illustrate the two error cases. Two-bit AND has one accepting
input, so its optimal degree-one error is $1/8$.  The construction gives
\[
 P(x_1,x_2)=\frac{x_1+x_2}{4}-\frac18
\]
which attains this error at all four inputs. Two-bit parity has two accepting
inputs, so the construction gives an exact degree-one torus representation,
$(x_1+x_2)/2$. At the input $(1,1)$, its value is $1$, which equals the required
value $0$ modulo integers.

\subsection{The exact length dependence for Majority}

For Majority, Theorem~\ref{thm:explicit} reduces the question of exact
representation below degree $n$ to the parity of its accepting-set size.  The
full-cube coefficient gives a convenient way to determine this parity. For a
threshold function, it is a signed binomial coefficient, whose parity can be
read from the binary digits of its indices.  For strict Majority, this
calculation singles out precisely the power-of-two input lengths.

The following corollary therefore describes exactly when degree $n-1$ suffices
for zero error. At power-of-two lengths, the formula still attains error
$2^{-n-1}$, but any improvement in error requires restoring full degree $n$.

\begin{corollary}\label{cor:majority-lengths}
For $n\ge1$,
\[
 E_{n-1}(\mathsf{MAJ}_n)=
 \left\{\begin{array}{ll}
 2^{-n-1},&n\hbox{ is a power of two},\\[2pt]
 0,&\hbox{otherwise}.
 \end{array}\right.
\]
At power-of-two lengths, every torus polynomial approximating Majority with
error strictly below $2^{-n-1}$ has degree at least $n$.
\end{corollary}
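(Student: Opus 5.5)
By Theorem~\ref{thm:explicit}, it suffices to decide when $|\mathsf{MAJ}_n^{-1}(1)|$ is odd. Equivalently, by (\ref{eq:top-parity}), we decide when the full-cube coefficient $c_{\mathsf{MAJ}_n}$ is odd. We then show this happens exactly when $n$ is a power of two.

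The plan is to compute $c_f$ for the threshold function $f(x)=\mathbf 1[|x|\ge t]$ with $t=\lfloor n/2\rfloor+1$. Using $c_f=\sum_x\chi(x)f(x)$ and grouping by weight gives
\[
 c_f=\sum_{k=t}^{n}(-1)^{n-k}\binom nk .
\]
The standard telescoping identity $\sum_{k\le m}(-1)^k\binom nk=(-1)^m\binom{n-1}{m}$, applied to the complementary range, yields
\[
 c_f=\pm\binom{n-1}{t-1}=\pm\binom{n-1}{\lfloor n/2\rfloor}.
\]
Alternatively, we can count directly: $|\mathsf{MAJ}_n^{-1}(1)|=\sum_{k>n/2}\binom nk$. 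For $n$ odd this equals $2^{n-1}$, and for $n$ even it equals $(2^n-\binom n{n/2})/2$. Either route reduces the question to the parity of a binomial coefficient. We keep the signed-binomial route because it treats both parities of $n$ uniformly.

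Next we apply Lucas's theorem, or equivalently Kummer's theorem. The coefficient $\binom{n-1}{\lfloor n/2\rfloor}$ is odd iff the binary digits of $\lfloor n/2\rfloor$ are dominated by those of $n-1$, that is, iff $\lfloor n/2\rfloor\,\&\,(n-1)=\lfloor n/2\rfloor$. If $n=2^j$, then $n-1$ is all ones in the low $j$ bits and $\lfloor n/2\rfloor=2^{j-1}$ fits, so the coefficient is odd. This includes $n=1$, where $\binom00=1$. Conversely, suppose $n$ is not a power of two, and let $2^j$ be the top bit of $n$, so $j\ge1$. Then $\lfloor n/2\rfloor$ has top bit $2^{j-1}$. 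We claim some bit of $\lfloor n/2\rfloor$ is missing from $n-1$. Write $n=2^j+r$ with $0<r<2^j$. Then $n-1=2^j+(r-1)$ and $\lfloor n/2\rfloor=2^{j-1}+\lfloor r/2\rfloor$. If bit $j-1$ is absent from $r-1$, we are done. Otherwise $r-1\ge2^{j-1}$, and the comparison recurses on the lower bits.

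This bit bookkeeping is the step most likely to be fiddly. To avoid it, we would first try Kummer's theorem instead. The $2$-adic valuation of $\binom{a+b}{a}$ equals the number of carries when adding $a$ and $b$ in binary. Here $a=\lfloor n/2\rfloor$ and $b=\lceil n/2\rceil-1$, which are either equal or differ by one, with $a+b=n-1$. Adding two nearly equal numbers has no carry iff their binary supports are disjoint. For $n$ odd we have $a=b$, and disjoint supports force $a=0$, so $n=1$. For $n$ even we have $a=b+1$, and $b$ and $b+1$ have disjoint supports iff $b=2^i-1$, so $n=2^{i+1}$. This gives the characterization cleanly.

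Combining this with (\ref{eq:minimax}) gives the displayed formula for $E_{n-1}(\mathsf{MAJ}_n)$. For the last assertion, consider a power-of-two $n$. A polynomial of degree at most $n-1$ has error at least $E_{n-1}=2^{-n-1}$ by the optimality part of Theorem~\ref{thm:explicit}. Hence any torus approximation with error strictly below $2^{-n-1}$ must have degree at least $n$. The remark in the preliminaries then transfers the result to ties-accepted Majority.
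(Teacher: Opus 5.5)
Your proposal is correct and follows the paper's proof essentially step for step: write $c_f=\sum_{k\ge t}(-1)^{n-k}\binom nk=\pm\binom{n-1}{t-1}$ with $t=\lfloor n/2\rfloor+1$, decide its parity with the binary disjoint-support (Kummer/Lucas) criterion, and apply Theorem~\ref{thm:explicit}, including its optimality part for the degree conclusion. The only cosmetic difference is in the odd case $n=2m+1>1$: the paper shows evenness via $\binom{2m}{m}=2\binom{2m-1}{m-1}$, whereas you apply the carry criterion with $a=b$, which forces $m=0$.
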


\begin{proof}
For the threshold $f_t(x)=\mathbf{1}[|x|\ge t]$, Pascal's identity and
cancellation in the alternating sum give
\[
 c_{f_t}=\sum_{j=t}^n(-1)^{n-j}\binom nj
        =(-1)^{n-t}\binom{n-1}{t-1}.
\]
For strict Majority take $t=\lfloor n/2\rfloor+1$.  If $n=2m+1>1$, the binomial
coefficient is $\binom{2m}{m}$, which is even because it equals
$2\binom{2m-1}{m-1}$.  For $n=1$ it is one.

If $n=2m$, the coefficient has the parity of $\binom{2m-1}{m}$.  The
binary-digit criterion for binomial coefficients says that $\binom{a+b}{a}$ is
odd precisely when $a$ and $b$ have disjoint binary supports. This follows by
expanding $(1+z)^u=\prod_{j:\,u_j=1}(1+z^{2^j})$ over $\mathbb F_2$.  With $a=m$
and $b=m-1$, disjointness holds precisely when $m$ is a power of two. Apply
Theorem~\ref{thm:explicit}.\footnote{The formal statement appears in
Appendix~\ref{app:lean}, Listing~\ref{lst:majority-lengths}.}
\end{proof}

At power-of-two input lengths, $2^{-n-1}$ is therefore the smallest error
achievable by a torus polynomial of degree at most $n-1$ approximating
Majority. This error is far below the inverse-polynomial error used in the
circuit simulation.

For Majority, the polynomial $P_f$ in (\ref{eq:explicit-new}) has integer lift
$Z(x)=-kx_{[n]}$. This lift takes values in $\{0,-k\}$, so it has at most two
values but interval width $|k|$. Since $k=(c_f-b)/2$, the binomial coefficient
formula above and Stirling's formula give
\[
 |k|=\Theta\!\left(\frac{2^n}{\sqrt n}\right).
\]
Thus even a lift taking only two values can have exponentially large width.
This illustrates the distinction between interval width and the number of
lift values.

\clearpage
\bibliographystyle{alphaurl}
\bibliography{MajorityAndTorus}

\begin{thebibliography}{BHLR19}

\bibitem[Ajt83]{Ajtai1983}
Mikl{\'o}s Ajtai.
\newblock {$\Sigma^1_1$}-formulae on finite structures.
\newblock {\em Annals of Pure and Applied Logic}, 24(1):1--48, 1983.
\newblock \href {https://doi.org/10.1016/0168-0072(83)90038-6}
  {\path{doi:10.1016/0168-0072(83)90038-6}}.

\bibitem[BHLR19]{bhlr}
Abhishek Bhrushundi, Kaave Hosseini, Shachar Lovett, and Sankeerth Rao.
\newblock Torus polynomials: An algebraic approach to {ACC} lower bounds.
\newblock In {\em 10th Innovations in Theoretical Computer Science Conference
  (ITCS 2019)}, volume 124 of {\em Leibniz International Proceedings in
  Informatics (LIPIcs)}, pages 13:1--13:16, 2019.
\newblock \href {https://doi.org/10.4230/LIPIcs.ITCS.2019.13}
  {\path{doi:10.4230/LIPIcs.ITCS.2019.13}}.

\bibitem[BT15]{bt}
Mark Bun and Justin Thaler.
\newblock Dual lower bounds for approximate degree and {Markov--Bernstein}
  inequalities.
\newblock {\em Information and Computation}, 243:2--25, 2015.
\newblock \href {https://doi.org/10.1016/j.ic.2014.12.003}
  {\path{doi:10.1016/j.ic.2014.12.003}}.

\bibitem[FSS81]{FuSaSi81}
M.~Furst, J.~Saxe, and M.~Sipser.
\newblock Parity, circuits, and the polynomial time hierarchy.
\newblock In {\em Proceedings of the Twenty-second Annual IEEE Symposium on
  Foundations of Computer Science}, pages 260--270, 1981.

\bibitem[H{\aa}s86]{hastad1986almost}
Johan H{\aa}stad.
\newblock Almost optimal lower bounds for small depth circuits.
\newblock In {\em Proceedings of the eighteenth annual ACM symposium on Theory
  of computing}, pages 6--20, 1986.

\bibitem[KV26]{kv}
Vaibhav Krishan and Sundar Vishwanathan.
\newblock Lower bounds and separations for torus polynomials.
\newblock In {\em 17th Innovations in Theoretical Computer Science Conference
  (ITCS 2026)}, volume 362 of {\em Leibniz International Proceedings in
  Informatics (LIPIcs)}, pages 88:1--88:20, 2026.
\newblock \href {https://doi.org/10.4230/LIPIcs.ITCS.2026.88}
  {\path{doi:10.4230/LIPIcs.ITCS.2026.88}}.

\bibitem[MW18]{Williams:2018}
Cody Murray and Ryan Williams.
\newblock Circuit lower bounds for nondeterministic quasi-polytime: an easy
  witness lemma for {NP} and {NQP}.
\newblock In {\em Proceedings of the 50th Annual ACM SIGACT Symposium on Theory
  of Computing}, STOC 2018, pages 890--901, New York, NY, USA, 2018.
  Association for Computing Machinery.
\newblock \href {https://doi.org/10.1145/3188745.3188910}
  {\path{doi:10.1145/3188745.3188910}}.

\bibitem[Pat92]{paturi}
Ramamohan Paturi.
\newblock On the degree of polynomials that approximate symmetric {Boolean}
  functions (preliminary version).
\newblock In {\em Proceedings of the Twenty-Fourth Annual ACM Symposium on
  Theory of Computing (STOC 1992)}, pages 468--474, 1992.
\newblock \href {https://doi.org/10.1145/129712.129758}
  {\path{doi:10.1145/129712.129758}}.

\bibitem[Raz87]{Razb1}
A.~A. Razborov.
\newblock Lower bounds on the size of bounded-depth networks over a complete
  basis with logical addition (in {R}ussian).
\newblock {\em MatZametki}, 41(4):598--607, 1987.
\newblock An English translation appeared in Mathematical Notes of the Academy
  of Sci. of the USSR 41(4) (1987) 333--338.
\newblock URL: \url{http://people.cs.uchicago.edu/~razborov/files/depth.pdf}.

\bibitem[Smo87]{Smo87}
Roman Smolensky.
\newblock Algebraic methods in the theory of lower bounds for {B}oolean circuit
  complexity.
\newblock In {\em Proc. 19th STOC}, pages 77--82, New York, 1987. ACM Press.

\bibitem[Wes26]{wesonga26}
Saint Wesonga.
\newblock Formalizing {PARITY} circuit lower bounds in {Lean}.
\newblock arXiv preprint arXiv:2609.24188, 2026.
\newblock URL: \url{https://arxiv.org/abs/2609.24188}.

\bibitem[Wil14]{Williams:2014}
Ryan Williams.
\newblock Nonuniform {ACC} circuit lower bounds.
\newblock {\em J. ACM}, 61(1):2:1--2:32, January 2014.
\newblock URL: \url{http://doi.acm.org/10.1145/2559903}, \href
  {https://doi.org/10.1145/2559903} {\path{doi:10.1145/2559903}}.

\end{thebibliography}

\clearpage
\appendix
\section{Formal theorem statements in Lean}\label{app:lean}
\renewcommand{\thelstlisting}{\thesection.\arabic{lstlisting}}
\setcounter{lstlisting}{0}

This appendix collects the formal definitions and theorem statements
corresponding to the results in the main text. The listings follow the order of
the mathematical development and retain references to the associated lemmas,
theorems, and corollaries.

\subsection{Definitions and reading conventions}

The numbered code blocks below give Lean definitions and theorem statements
corresponding to the mathematics in the text. Proof scripts are
omitted. Listing~\ref{lst:approximation-definitions} gives the formal
definitions of real and torus approximation.

In these blocks, \lstinline|Fin n| indexes the $n$ input coordinates,
\lstinline|Bool| is the type of Boolean values, and \lstinline|Cube n|
represents the Boolean cube. The function \lstinline|bit| converts false and
true to the real numbers $0$ and $1$, and \lstinline|P x| means $P(x)$.
Membership in \lstinline|degreeSpace (K := Real) (Fin n) d| means that a cube
function is represented by a multilinear polynomial of degree at most $d$.

\noindent\begin{minipage}{\linewidth}
\begin{lstlisting}[caption={Boolean inputs and approximation predicates},
                   mathescape=true, label={lst:approximation-definitions}]
abbrev Cube (n : Nat) := Fin n $\to$ Bool

def bit (b : Bool) : $\mathbb{R}$ := if b then 1 else 0

def HasRealApproximation {n : Nat} (f : Cube n $\to$ Bool) (d : Nat) ($\varepsilon$ : $\mathbb{R}$) : Prop :=
  $\exists$ P : Cube n $\to$ $\mathbb{R}$,
    P $\in$ degreeSpace (K := $\mathbb{R}$) (Fin n) d $\land$
    $\forall$ x, |P x - bit (f x)| $\leq$ $\varepsilon$

def HasTorusApproximation {n : Nat} (f : Cube n $\to$ Bool) (d : Nat) ($\varepsilon$ : $\mathbb{R}$) : Prop :=
  $\exists$ P : Cube n $\to$ $\mathbb{R}$,
    P $\in$ degreeSpace (K := $\mathbb{R}$) (Fin n) d $\land$
    $\exists$ Z : Cube n $\to$ $\mathbb{Z}$, $\forall$ x, |P x - (Z x : $\mathbb{R}$) - bit (f x) / 2| $\leq$ $\varepsilon$
\end{lstlisting}
\end{minipage}

The names \lstinline|realApproxDegree f eta|, \lstinline|optimalTorusError f d|,
and \lstinline|torusDistance u| denote $\widetilde{\deg}_{\eta}(f)$, $E_d(f)$,
and $\|u\|_{\mathbb T}$, respectively. The predicate
\lstinline|HasTorusApproximation f d eps| packages the existence of a
degree-at-most-$d$ approximating polynomial and an integer lift. The definition
\lstinline|strictMajority n| uses the strict tie convention $|x|>n/2$. All
results displayed below have proofs formally verified in Lean. The Majority
degree--range and asymptotic witness consequences use a proved linear real
approximate-degree lower bound in Listing~\ref{lst:paturi-input}.

\subsection{Symmetrization and Paturi's characterization}

The predicate \lstinline|HasUnivariateApproximation n F d eps| in
Listing~\ref{lst:univariate-approximation} means that some real polynomial $q$
of degree at most $d$ satisfies $|q(k)-F(k)|\leq\varepsilon$ for every integer
$0\leq k\leq n$. Here \lstinline|q.eval| denotes evaluation of $q$, and
\lstinline|q.natDegree| is its degree, with the zero polynomial assigned degree
zero.

\noindent\begin{minipage}{\linewidth}
\begin{lstlisting}[caption={Univariate approximation on the Hamming-weight grid},
                   mathescape=true, label={lst:univariate-approximation}]
def HasUnivariateApproximation (n : Nat) (F : Nat $\to$ Bool)
    (d : Nat) ($\varepsilon$ : $\mathbb{R}$) : Prop :=
  $\exists$ q : Polynomial $\mathbb{R}$, q.natDegree $\leq$ d $\land$
    $\forall$ k : Nat, k $\leq$ n $\to$ |q.eval (k : $\mathbb{R}$) - bit (F k)| $\leq$ $\varepsilon$
\end{lstlisting}
\end{minipage}

In Listing~\ref{lst:symmetrization}, \lstinline|profileFunction F| denotes
$x\mapsto F(|x|)$. The theorem equates real approximation of this symmetric
function with univariate approximation of its profile.

\noindent\begin{minipage}{\linewidth}
\begin{lstlisting}[caption={Exact symmetrization of polynomial approximation},
                   mathescape=true, label={lst:symmetrization}]
theorem hasRealApproximation_profile_iff_univariate
    {n d : Nat} {F : Nat $\to$ Bool} {$\varepsilon$ : $\mathbb{R}$} :
    HasRealApproximation (profileFunction (n := n) F) d $\varepsilon$ $\leftrightarrow$
      HasUnivariateApproximation n F d $\varepsilon$
\end{lstlisting}
\end{minipage}

The proof of Paturi's characterization described in the main text is formalized
in Lean, including the kernel and grid estimates, error reduction, and boundary
cases. Listing~\ref{lst:paturi-characterization} gives the resulting
statement. Here \lstinline|paturiScale n F| is $\sqrt{n(n-\Gamma(F))}$ with the
convention for constant profiles used in the main text. Every symmetric Boolean
function has a canonical Hamming-weight profile, so this formulation covers all
such functions.

\noindent\begin{minipage}{\linewidth}
\begin{lstlisting}[caption={Paturi's characterization with uniform absolute constants},
                   mathescape=true, label={lst:paturi-characterization}]
theorem paturi_characterization_profile :
    $\exists$ c C : $\mathbb{R}$, 0 < c $\land$ 0 < C $\land$ $\forall$ (n : Nat) (F : Nat $\to$ Bool),
      c * paturiScale n F $\leq$
        (realApproxDegree (profileFunction (n := n) F) (1 / 3) : $\mathbb{R}$) $\land$
      (realApproxDegree (profileFunction (n := n) F) (1 / 3) : $\mathbb{R}$) $\leq$
        C * paturiScale n F
\end{lstlisting}
\end{minipage}

\subsection{Bounded lifts and dual witnesses}

Listing~\ref{lst:bounded-decoder} gives the equivalent existence statement.
Here \lstinline|Icc| denotes a closed interval.

\noindent\begin{minipage}{\linewidth}
\begin{lstlisting}[caption={Lemma~\ref{lem:decoder}: bounded integer decoder},
                   mathescape=true, label={lst:bounded-decoder}]
theorem exists_bounded_integer_decoder (B : Nat) :
    $\exists$ H : Polynomial $\mathbb{R}$, H.natDegree $\leq$ 16 * (B + 1) - 1 $\land$
      $\forall$ t $\in$ Icc (-1 / 4 : $\mathbb{R}$) (2 * (B : $\mathbb{R}$) + 5 / 4),
        |H.eval t - (1 - Real.cos (Real.pi * t)) / 2| < 1 / 32
\end{lstlisting}
\end{minipage}

The Lean formulation in Listing~\ref{lst:bounded-conversion} states the
same degree bound. Nonnegativity of the error follows from
\lstinline|herr| at any Boolean input.

\noindent\begin{minipage}{\linewidth}
\begin{lstlisting}[caption={Theorem~\ref{thm:bounded-lift}: conversion to real approximate degree},
                   mathescape=true, label={lst:bounded-conversion}]
theorem bounded_integer_lift_conversion {n d : Nat}
    {f : Cube n $\to$ Bool} {P : Cube n $\to$ $\mathbb{R}$}
    (hP : P $\in$ degreeSpace (K := $\mathbb{R}$) (Fin n) d)
    (Z : Cube n $\to$ $\mathbb{Z}$) (a : $\mathbb{Z}$) (B : Nat) ($\varepsilon$ : $\mathbb{R}$)
    (h$\varepsilon$ : $\varepsilon$ $\leq$ 1 / 8)
    (hZ : $\forall$ x, a $\leq$ Z x $\land$ Z x $\leq$ a + (B : $\mathbb{Z}$))
    (herr : $\forall$ x, |P x - (Z x : $\mathbb{R}$) - bit (f x) / 2| $\leq$ $\varepsilon$) :
    realApproxDegree f (1 / 3) $\leq$ (16 * (B + 1) - 1) * d
\end{lstlisting}
\end{minipage}

Listings~\ref{lst:paturi-input} and~\ref{lst:majority-range} give the formally
verified Majority lower bound and the resulting degree--range tradeoff from
Corollary~\ref{cor:range}.

\noindent\begin{minipage}{\linewidth}
\begin{lstlisting}[caption={The proved Majority approximate-degree lower bound},
                   mathescape=true, label={lst:paturi-input}]
theorem strictMajority_realApproxDegree_lower_bound (n : Nat) :
  (1 / 10000 : $\mathbb{R}$) * n $\leq$
    (realApproxDegree (strictMajority n) (1 / 3) : $\mathbb{R}$)
\end{lstlisting}
\end{minipage}

\noindent\begin{minipage}{\linewidth}
\begin{lstlisting}[caption={Corollary~\ref{cor:range}: the Majority degree--range tradeoff},
                   mathescape=true, label={lst:majority-range}]
theorem majority_degree_range_tradeoff {n d : Nat} {P : Cube n $\to$ $\mathbb{R}$}
    (hP : P $\in$ degreeSpace (K := $\mathbb{R}$) (Fin n) d)
    (Z : Cube n $\to$ $\mathbb{Z}$) (a : $\mathbb{Z}$) (B : Nat) ($\varepsilon$ : $\mathbb{R}$)
    (h$\varepsilon$ : $\varepsilon$ $\leq$ 1 / 8)
    (hZ : $\forall$ x, a $\leq$ Z x $\land$ Z x $\leq$ a + (B : $\mathbb{Z}$))
    (herr : $\forall$ x, |P x - (Z x : $\mathbb{R}$) - bit (strictMajority n x) / 2| $\leq$ $\varepsilon$) :
    (1 / 160000 : $\mathbb{R}$) * n $\leq$ (d * (B + 1) : Nat)
\end{lstlisting}
\end{minipage}

For fixed $B$, rearranging this inequality gives
$d\ge n/(160000(B+1))$. The corresponding Lean theorem is
\lstinline|majority_fixed_width_degree_bound|.

Listing~\ref{lst:ternary-decoder} states the more general version valid for any
Boolean function $f$. Taking \lstinline|f := strictMajority n| gives
Theorem~\ref{thm:ternary}.

\noindent\begin{minipage}{\linewidth}
\begin{lstlisting}[caption={Theorem~\ref{thm:ternary}: quintic conversion for any Boolean function},
                   mathescape=true, label={lst:ternary-decoder}]
theorem ternary_lift_realApproxDegree_le {n d : Nat} (hn : 40 $\leq$ n)
    (f : Cube n $\to$ Bool) (P : Cube n $\to$ $\mathbb{R}$) (Z : Cube n $\to$ $\mathbb{Z}$)
    (hP : P $\in$ degreeSpace (K := $\mathbb{R}$) (Fin n) d)
    (hZ : $\forall$ x, -1 $\leq$ Z x $\land$ Z x $\leq$ 1)
    (happrox : $\forall$ x, |P x - (Z x : $\mathbb{R}$) - bit (f x) / 2| $\leq$ 1 / (20 * (n : $\mathbb{R}$))) :
    realApproxDegree f (1 / 3) $\leq$ 5 * d
\end{lstlisting}
\end{minipage}

The finite witness statement, Listing~\ref{lst:ternary-dual}, also holds for
arbitrary Boolean $f$ and needs only the displayed approximate-degree
inequality. Its first conclusion is precisely $\gamma\in V_d^\perp$.  The second
is the strict separation in Corollary~\ref{cor:dual}.

\noindent\begin{minipage}{\linewidth}
\begin{lstlisting}[caption={Corollary~\ref{cor:dual}: the finite strict dual witness},
                   mathescape=true, label={lst:ternary-dual}]
theorem ternary_lift_exists_strict_dual_of_degree {n d : Nat} (hn : 40 $\leq$ n)
    (f : Cube n $\to$ Bool) (Z : Cube n $\to$ $\mathbb{Z}$)
    (hZ : $\forall$ x, -1 $\leq$ Z x $\land$ Z x $\leq$ 1)
    (hd : 5 * d < realApproxDegree f (1 / 3)) :
    $\exists$ $\gamma$ : Cube n $\to$ $\mathbb{R}$,
      ($\forall$ P $\in$ degreeSpace (K := $\mathbb{R}$) (Fin n) d,
        $\sum$ x, $\gamma$ x * P x = 0) $\land$
      ($\sum$ x, |$\gamma$ x|) / (20 * (n : $\mathbb{R}$)) <
        |$\sum$ x, $\gamma$ x * ((Z x : $\mathbb{R}$) + bit (f x) / 2)|
\end{lstlisting}
\end{minipage}

Listing~\ref{lst:asymptotic-dual} matches Open Problem 2 with ties accepted and
the stated big-O degree bound. Here \lstinline|weakMajority n x| is true exactly
when $2|x|\geq n$, so it is the function $M_n$ used in the box. The notation
\lstinline|=O[atTop]| means big-O as $n\to\infty$, and $\forall^{\mathrm f} n$
means ``for all sufficiently large $n$.''  The order of quantifiers makes the
threshold independent of $Z$, while allowing the witness $\gamma$ to depend on
$Z$.

\noindent\begin{minipage}{\linewidth}
\begin{lstlisting}[caption={Corollary~\ref{cor:asymptotic-dual}: ties accepted and the big-O degree bound},
                   mathescape=true, label={lst:asymptotic-dual}]
theorem eventually_weakMajority_ternary_dual (d : Nat $\to$ Nat)
    (hd : (fun n $\mapsto$ (d n : $\mathbb{R}$)) =O[atTop]
      (fun n $\mapsto$ Real.sqrt ((n : $\mathbb{R}$) / Real.log n))) :
    $\forall^{\mathrm f}$ n : Nat in atTop, $\forall$ (Z : Cube n $\to$ $\mathbb{Z}$),
      ($\forall$ x, -1 $\leq$ Z x $\land$ Z x $\leq$ 1) $\to$
      $\exists$ $\gamma$ : Cube n $\to$ $\mathbb{R}$,
        ($\forall$ P $\in$ degreeSpace (K := $\mathbb{R}$) (Fin n) (d n),
          $\sum$ x, $\gamma$ x * P x = 0) $\land$
        ($\sum$ x, |$\gamma$ x|) / (20 * (n : $\mathbb{R}$)) <
          |$\sum$ x, $\gamma$ x * ((Z x : $\mathbb{R}$) + bit (weakMajority n x) / 2)|
\end{lstlisting}
\end{minipage}

The stronger linear degree bound is formalized separately in
Listing~\ref{lst:linear-dual}. Its hypothesis is an inequality between natural
numbers. The big-O bound above implies this inequality for all sufficiently
large $n$. Both statements are also proved with \lstinline|strictMajority| in
place of \lstinline|weakMajority|.

\noindent\begin{minipage}{\linewidth}
\begin{lstlisting}[caption={Corollary~\ref{cor:asymptotic-dual}: the eventual linear degree bound},
                   mathescape=true, label={lst:linear-dual}]
theorem eventually_weakMajority_ternary_dual_of_linear (d : Nat $\to$ Nat)
    (hd : $\forall^{\mathrm f}$ n : Nat in atTop, 50000 * d n < n) :
    $\forall^{\mathrm f}$ n : Nat in atTop, $\forall$ (Z : Cube n $\to$ $\mathbb{Z}$),
      ($\forall$ x, -1 $\leq$ Z x $\land$ Z x $\leq$ 1) $\to$
      $\exists$ $\gamma$ : Cube n $\to$ $\mathbb{R}$,
        ($\forall$ P $\in$ degreeSpace (K := $\mathbb{R}$) (Fin n) (d n),
          $\sum$ x, $\gamma$ x * P x = 0) $\land$
        ($\sum$ x, |$\gamma$ x|) / (20 * (n : $\mathbb{R}$)) <
          |$\sum$ x, $\gamma$ x * ((Z x : $\mathbb{R}$) + bit (weakMajority n x) / 2)|
\end{lstlisting}
\end{minipage}

\subsection{Explicit constructions and their guarantees}

The definition in Listing~\ref{lst:explicit-formula} evaluates
(\ref{eq:explicit-new}) on the Boolean cube. Here \lstinline|topCoefficient f|
is $c_f$, \lstinline|fullMonomial x| is $x_{[n]}$, and
\lstinline|parityCharacter x| is the value $\chi(x)$ of the alternating-sign
polynomial, with the same sign convention as in the main text. Division and
remainder in the integer expressions give $k$ and $b$, including when $c_f$ is
negative. The value \lstinline|acceptingCount f| is $|f^{-1}(1)|$, so
\lstinline|explicitError f| equals $b2^{-n-1}$.

\noindent\begin{minipage}{\linewidth}
\begin{lstlisting}[caption={The explicit approximating polynomial and its error},
                   mathescape=true, label={lst:explicit-formula}]
def explicitApproximant {n : Nat} (f : (Fin n $\to$ Bool) $\to$ Bool) (x : Fin n $\to$ Bool) : $\mathbb{R}$ :=
  (if f x then 1 else 0) / 2 -
    (topCoefficient f / 2 : $\mathbb{Z}$) * (fullMonomial x : $\mathbb{R}$) -
    (topCoefficient f % 2 : $\mathbb{Z}$) * (parityCharacter x : $\mathbb{R}$) / (2 : $\mathbb{R}$) ^ (n + 1)

def explicitError {n : Nat} (f : Cube n $\to$ Bool) : $\mathbb{R}$ :=
  if Odd (acceptingCount f) then 1 / (2 : $\mathbb{R}$) ^ (n + 1) else 0
\end{lstlisting}
\end{minipage}

Listings~\ref{lst:explicit-degree-error}
and~\ref{lst:explicit-optimal-symmetric} cover all conclusions of
Theorem~\ref{thm:explicit}. The symmetry statement preserves invariance under
each coordinate permutation \lstinline|e|.  Applying it to every permutation
proves the stated symmetry conclusion.

\noindent\begin{minipage}{\linewidth}
\begin{lstlisting}[caption={Theorem~\ref{thm:explicit}: degree and pointwise torus error},
                   mathescape=true, label={lst:explicit-degree-error}]
theorem explicitApproximant_mem_degreeSpace {n : Nat} (hn : 0 < n)
    (f : (Fin n $\to$ Bool) $\to$ Bool) :
    explicitApproximant f $\in$ degreeSpace (K := $\mathbb{R}$) (Fin n) (n - 1)

theorem explicitApproximant_torusDistance {n : Nat} (hn : 0 < n)
    (f : Cube n $\to$ Bool) (x : Cube n) :
    torusDistance (explicitApproximant f x - bit (f x) / 2) = explicitError f
\end{lstlisting}
\end{minipage}

\noindent\begin{minipage}{\linewidth}
\begin{lstlisting}[caption={Theorem~\ref{thm:explicit}: optimality and symmetry},
                   mathescape=true, label={lst:explicit-optimal-symmetric}]
theorem optimalTorusError_pred {n : Nat} (hn : 0 < n) (f : Cube n $\to$ Bool) :
    optimalTorusError f (n - 1) =
      if Odd (acceptingCount f) then 1 / (2 : $\mathbb{R}$) ^ (n + 1) else 0

theorem explicitApproximant_perm {n : Nat} (f : (Fin n $\to$ Bool) $\to$ Bool)
    (e : Equiv.Perm (Fin n)) (hf : $\forall$ x, f (x $\circ$ e) = f x) (x : Fin n $\to$ Bool) :
    explicitApproximant f (x $\circ$ e) = explicitApproximant f x
\end{lstlisting}
\end{minipage}

The formally verified coefficient bound in Listing~\ref{lst:coefficient-size}
makes the $O(n)$-bit assertion quantitative. A common denominator is $2^{n+1}$
and each integer numerator has absolute value at most $2^{2n+1}$. Here
\lstinline|mobiusCoeff| extracts a multilinear coefficient.

\noindent\begin{minipage}{\linewidth}
\begin{lstlisting}[caption={Common denominator and numerator bound},
                   mathescape=true, label={lst:coefficient-size}]
theorem explicitApproximant_coefficients_dyadic {n : Nat}
    (f : Cube n $\to$ Bool) (S : Finset (Fin n)) :
    $\exists$ N : $\mathbb{Z}$, mobiusCoeff (explicitApproximant f) S =
        (N : $\mathbb{R}$) / (2 : $\mathbb{R}$) ^ (n + 1) $\land$ |N| $\leq$ (2 : $\mathbb{Z}$) ^ (2 * n + 1)
\end{lstlisting}
\end{minipage}

The arithmetic bound is certified by Listing~\ref{lst:fast-mobius}.  The
function \lstinline|tabulate| stores a supplied truth table,
\lstinline|transform| performs the recursive Boolean M\"obius transform, and
\lstinline|read| retrieves the entry indexed by \lstinline|assignmentOfSet u|.
The cost counts integer subtractions at unit cost. It is exactly $n2^{n-1}$,
hence at most $n2^n$ in this arithmetic cost model.

\noindent\begin{minipage}{\linewidth}
\begin{lstlisting}[caption={Correctness and arithmetic cost of the fast transform},
                   mathescape=true, label={lst:fast-mobius}]
theorem transform_tabulate_spec {n : Nat} (f : Cube n $\to$ $\mathbb{Z}$) :
    ($\forall$ u, read (transform (tabulate f)) (assignmentOfSet u) = mobiusCoeff f u) $\land$
      transformCost (tabulate f) = n * 2 ^ (n - 1)
\end{lstlisting}
\end{minipage}

Listing~\ref{lst:majority-lengths} gives both parts of the corollary.  The
existential condition includes $n=1=2^0$, and the degree conclusion uses a
strict error inequality, as in the mathematical statement.

\noindent\begin{minipage}{\linewidth}
\begin{lstlisting}[caption={Corollary~\ref{cor:majority-lengths}: exact error and full-degree necessity},
                   mathescape=true, label={lst:majority-lengths}]
theorem optimalTorusError_strictMajority (n : Nat) (hn : 0 < n) :
    optimalTorusError (strictMajority n) (n - 1) =
      if $\exists$ k : Nat, n = 2 ^ k then 1 / (2 : $\mathbb{R}$) ^ (n + 1) else 0

theorem strictMajority_degree_ge_of_small_error (n : Nat) (hn : 0 < n)
    (hpow : $\exists$ k : Nat, n = 2 ^ k) {d : Nat} {$\varepsilon$ : $\mathbb{R}$}
    (h$\varepsilon$ : $\varepsilon$ < 1 / (2 : $\mathbb{R}$) ^ (n + 1))
    (h : HasTorusApproximation (strictMajority n) d $\varepsilon$) : n $\leq$ d
\end{lstlisting}
\end{minipage}

\end{document}